\documentclass[journal]{IEEEtran}

%
\ifCLASSINFOpdf
\else
\fi

\usepackage{cite}
\usepackage{amsmath, amsthm, amssymb, amsfonts, mathtools, eurosym, scalerel}
\usepackage{graphicx, float, subfigure, epsfig}

\newtheorem{thm}{Theorem}
\newtheorem{lemm}{Lemma}
\newtheorem{corollary}{Corollary}

\usepackage{stfloats}
\usepackage{booktabs, tabularx, makecell}

\begin{document}

\title{Stochastic Geometry Modeling and Analysis for THz-mmWave Hybrid IoT Networks}

\author{Chao~Wang,
        Young Jin Chun,~\IEEEmembership{Member,~IEEE}
\thanks{This work was supported by in part by the City University of Hong Kong (CityU), Startup Grant 7200618, and in part by the CityU, Strategic Research Grant 21219520}        
\thanks{C. Wang and Y. J. Chun are with the Department of Electrical Engineering, City University of Hong Kong, Hong Kong, China (e-mail: cwang224-c@my.cityu.edu.hk; yjchun@cityu.edu.hk)}}


\maketitle

\begin{abstract}
Terahertz (THz) band (0.1–10 THz) contains abundant spectrum resources that can offer ultra-high data rates. However, due to the THz band's inherent characteristics, i.e., low penetrability, high path loss, and non-negligible molecular absorption effect, THz communication can only provide limited coverage. To overcome these fundamental obstacles and fully utilize the THz band, we consider a hybrid Internet-of-Things (IoT) network consisting of THz and millimeter wave (mmWave) cells. A hybrid IoT network can dynamically switch between mmWave and THz links to ensure reliable and ultra-fast data connection. We use a stochastic geometric framework to evaluate the proposed hybrid IoT network's coverage probability and spectral efficiency and validate the analysis through numerical simulation. In this paper, we derive a closed-form expression of the Laplace transform of the interference while considering an accurate multi-level ﬂat-top (MLFT) antenna pattern. We observed that a large antenna array with a strong bias to the THz base station (TBS) improves the end-to-end network performance through numerical results. Furthermore, we recognized a fundamental trade-off relation between the TBS's node density and the bias to mmWave/THz; \textit{e.g.}, high TBS density with a strong bias to the TBS may degrade the network performance. 
\end{abstract}

\begin{IEEEkeywords}
Terahertz communication, stochastic geometry, hybrid IoT network, molecular absorption coefficient.
\end{IEEEkeywords}

\IEEEpeerreviewmaketitle

\section{Introduction}
The emergence of a tremendous amount of rate-greedy applications and a rapidly increasing number of mobile user equipment (UE) and IoT devices, together with a growing demand for faster data rates, are harshly depleting the spectrum resources. With its ample spectrum resource, the THz band is envisioned as the key enabler of the ultra-high data rate of Terabits-per-second (Tbps). However, despite its potential in the next generation network system, there are some obstacles to overcome to take full advantage of the THz band. 

The propagation characteristics of the THz band are affected by several factors. First, THz waves undergo severe path loss due to their ultra-high frequency, leading to a limited transmission range and low penetrability of THz communication. Second, the molecular absorption loss caused by converting signal energy into molecules' kinetic energy degrades the THz transmission. It has been revealed that the absorption loss caused by water vapor and oxygen molecules has a substantial impact on the THz communication \cite{jornet2011channel}. Third, due to the power constraint, directional antennas are utilized, generating highly directive THz transmission. Given the above, THz communication has the characteristics of high data rate, strong directivity but low penetrability, and limited coverage. A promising solution to overcome these drawbacks is to densely deploy many THz small cells equipped with highly directional antennas.  

Despite the very limited THz communication coverage, the THz band's IoT networks are envisioned as promising applications. Nevertheless, the stand-alone deployment of TBS is not sufficient to achieve universal coverage. A compromised but reliable strategy is to deploy TBS over ongoing massively deployed mmWave networks in a hybrid manner so that UEs can access either TBSs or the mmWave base stations (MBSs) depending on their link quality and association strategy. This hybrid IoT network composed of THz and mmWave cells is expected to support both high data rate and extended coverage. In specific, a UE can choose a THz link to support its service when the THz link outperforms the mmWave links according to the association strategy. Otherwise, when the THz link quality drops below the required threshold, the handoff event occurs, and the UE will connect to MBSs.

In this study, we want to model and analyze the hybrid IoT networks consisting of TBSs and MBSs using stochastic geometry methods. We first analyze the signal-to-interference-plus-noise ratio ($\tt SINR$) distribution of THz-only and mmWave-only networks, respectively. The coverage probability and spectral efficiency will be investigated based on the $\tt SINR$ distribution. With the aforementioned expressions in hand, the performance of the hybrid IoT network will be evaluated. The theoretical accuracy will be validated with simulations.

The rest of this paper is organized as follows. In Sec II, we will review the previous works about THz networks and hybrid IoT networks. In Sec III, the network topology and system model will be elaborated. The association probability and conditioned distance distribution will also be derived. In Sec IV, we focus on the theoretical framework. The expressions of coverage probability and spectral efficiency will be provided. In Sec V, we present the numerical results and discussion about the impact of various parameters on the network performance, and we conclude the paper in Sec VI.

\section{Related works}
The THz band is envisioned as the key enabler of the next-generation wireless communication system. A variety of previous works have investigated the THz propagation characteristics and THz network performance. In \cite{jornet2011channel}, the THz channel model was comprehensively studied. This work numerically investigated the THz propagation characteristics, i.e., the molecular absorption loss, severe path loss, and molecular absorption noise. The mathematical framework of the THz channel model was proposed, and the channel capacity was evaluated for different medium compositions. The results revealed the significant impact of molecular absorption by water vapor molecules, which demonstrated the necessity of molecular absorption loss in THz channel modeling.

Stochastic geometry models are widely utilized to evaluate the performance of THz cellular networks and THz-supported IoT networks. In \cite{petrov2017interference}, the THz networks were modeled by the Poisson point process (PPP), and the aggregated network interference and $\tt SINR$ were evaluated. The proposed model considered the peculiar THz channel characteristics, e.g., the molecular absorption loss, severe path loss of THz, molecular absorption noise, and the Johnson-Nyquist noise, as well as simplified directional antenna pattern and blockage probability. This work built a mathematical framework to analyze the performance of THz networks. Based on this model, the impact of the different properties of the THz channel on the network's performance was revealed in detail. In \cite{kokkoniemi2017stochastic}, the authors focused on the interference power and outage probability of THz networks, which adopted the channel model proposed in \cite{jornet2011channel}. The same THz channel model was also adopted in \cite{liu2020covert} which investigated the convert communication in IoT networks in the THz band, and the mean and variance of the interference in THz networks were derived. It was revealed in \cite{liu2020covert} that THz links are of more security than AWGN channels.

However, an obstacle in analyzing $\tt SINR$ of THz networks is deriving the closed-form expression of interference distribution. Most of the current works adopted asymptotic models to approximate the $\tt SINR$. In \cite{petrov2017interference}, Taylor expansion was utilized to estimate the $\tt SINR$. In \cite{kokkoniemi2017stochastic}, the mean interference power and outage probability were investigated by approximating the interference as log-logistic distribution. Unlike previous works, the wireless local area networks in the THz band (T-WLAN) were investigated in \cite{wu2019interference} where the APs and UEs were confined in an indoor environment. This work modeled the small-scale THz fading phenomena with the Nakagami-$m$ fading model and utilized log-normal distribution to approximate interference distribution. In \cite{hossain2019stochastic}, the authors utilized the TS-OOK modulation scheme \cite{6804405} and stochastic geometry to evaluate the multi-user interference of THz networks and experimentally validated.

Considering the current wireless communication infrastructure, it is highly possible that the THz networks will be overlaid on mmWave networks. Additionally, this hybrid deployment manner can also complement the THz networks to optimize the overall system coverage. This two-tier hybrid IoT network with the coexistence of THz and mmWave band is essentially a type of heterogeneous network (HetNet) while there is no inter-tier interference with each other. Similar researches have been conducted in the case where mmWave and LTE coexist. In \cite{elshaer2016downlink}, a hybrid cellular network with the coexistence of mmWave and sub-6 GHz was investigated. In this work, both the mmWave and sub-6 GHz tiers were modeled as PPP networks. With consideration of decoupled association, a UE will independently connect to a mmWave base station (BS) or a sub-6 GHz BS on both the uplink and downlink. Association strategies and coverage analysis for $\tt SINR$ and rate were numerically analyzed. In \cite{andrews2016modeling}, the authors analyzed the hybrid network from a general perspective. The rate coverage expression of the hybrid network was proposed. Furthermore, it was also pointed out that this type of HetNet can also be modeled with clustered point processes such as the Neyman-Scott process, where the sub-6 GHz BSs will be located at the center of clusters while the mmWave BSs will be deployed around the central sub-6 GHz BSs. In \cite{jo2012heterogeneous}, the heterogeneous cellular networks consisting of multi-tier networks with flexible cell association sharing the same spectral band were investigated. Different from the proposed THz-mmWave hybrid IoT networks, inter-tier interference has to be considered in this heterogeneous network. A similar hybrid network with the coexistence of THz and radio frequency (RF) networks was proposed in \cite{sayehvand2020interference}. This study considered the THz propagation characteristics and simplified directional antenna patterns. A closed-form expression of aggregate interference was derived to evaluate the coverage probability of THz networks. In this study, an opportunistic RF/THz network and a hybrid network were investigated. In the opportunistic RF/THz network, a user can associate to the BS with maximum biased received signal power (BRSP), whereas, in the hybrid network, a user can associate with both nearest RF and THz BS.

Another crucial factor in mmWave and THz networks is the directional antenna. Most of the previous works \cite{wu2019interference, petrov2017interference,sayehvand2020interference} adopted the binarized antenna pattern, i.e., the flat-top pattern, which only contain the main lobe and side lobes in order to simplify the derivation. It is demonstrated in \cite{yu2017coverage} that this flat-top antenna pattern can only reveal limited insights and lead to significant deviation from the performance of the actual antenna pattern. However, the actual antenna pattern is fairly intractable in the analysis. To improve analytical tractability, two approximate antenna patterns were proposed in \cite{yu2017coverage} and achieved decent agreement of the performance of the actual antenna pattern. In \cite{wei2018simple}, a discrete approximation antenna pattern was proposed to make the derivation more tractable and maintained a decent accuracy.

To summarize, evaluating the performance of the THz network is still an open problem. Furthermore, there is so far no investigation on the THz-mmWave hybrid IoT networks, which is a promising trend to optimize the utilization of the THz band. In this work, we want to reveal more insights into THz networks and the performance of the proposed hybrid IoT networks, utilizing accurate directional antenna patterns. The closed-form expression of the Laplace transform of interference in THz and mmWave networks will be derived. Furthermore, the coverage probability and spectral efficiency will be analyzed to evaluate the performance of the network.

\section{System Model}
\subsection{Network Topology}
We consider a downlink (DL) of a two-tier hybrid IoT network formed by MBSs and TBSs. Each tier is modeled as independent homogeneous Poisson point process (PPP) $\Phi_M$ with density $\lambda_m$ for MBSs (or $\Phi_T$ with density $\lambda_T$ for TBSs) on a $\mathbb{R}^2$ plane as illustrated in Fig. \ref{system model}. The UE nodes are distributed by a PPP $\Phi_u$ with density $\lambda_u$. Without loss of generality, we assume that the typical user is located at the origin. We represent the link between the typical UE and the $i$-th BS as $x_{i}$, where we use subscript $m$ and $T$ to differentiate between mmWave and THz link, \textit{i.e.}, $x_{m, i}$ is the $i$-th mmWave link and $x_{T, i}$ is the $i$-th THz link. The intended link between the typical UE and the associated BS is assumed $i = 0$ and we removed the subscript to simplify the notation, \textit{i.e.}, $x_{m, 0} = x_{m}$, $x_{T, 0} = x_{T}$. Each UE is equipped with an omnidirectional antenna with antenna gain $G_u = 1$. Additionally, the BSs allocate a radio resource block to the associated UEs in an orthogonal way. The parameters used in this work are summarized in Table I.

\begin{figure*}[!htp]
  \centering
  \subfigure[]
  {\includegraphics[width=0.43\linewidth]{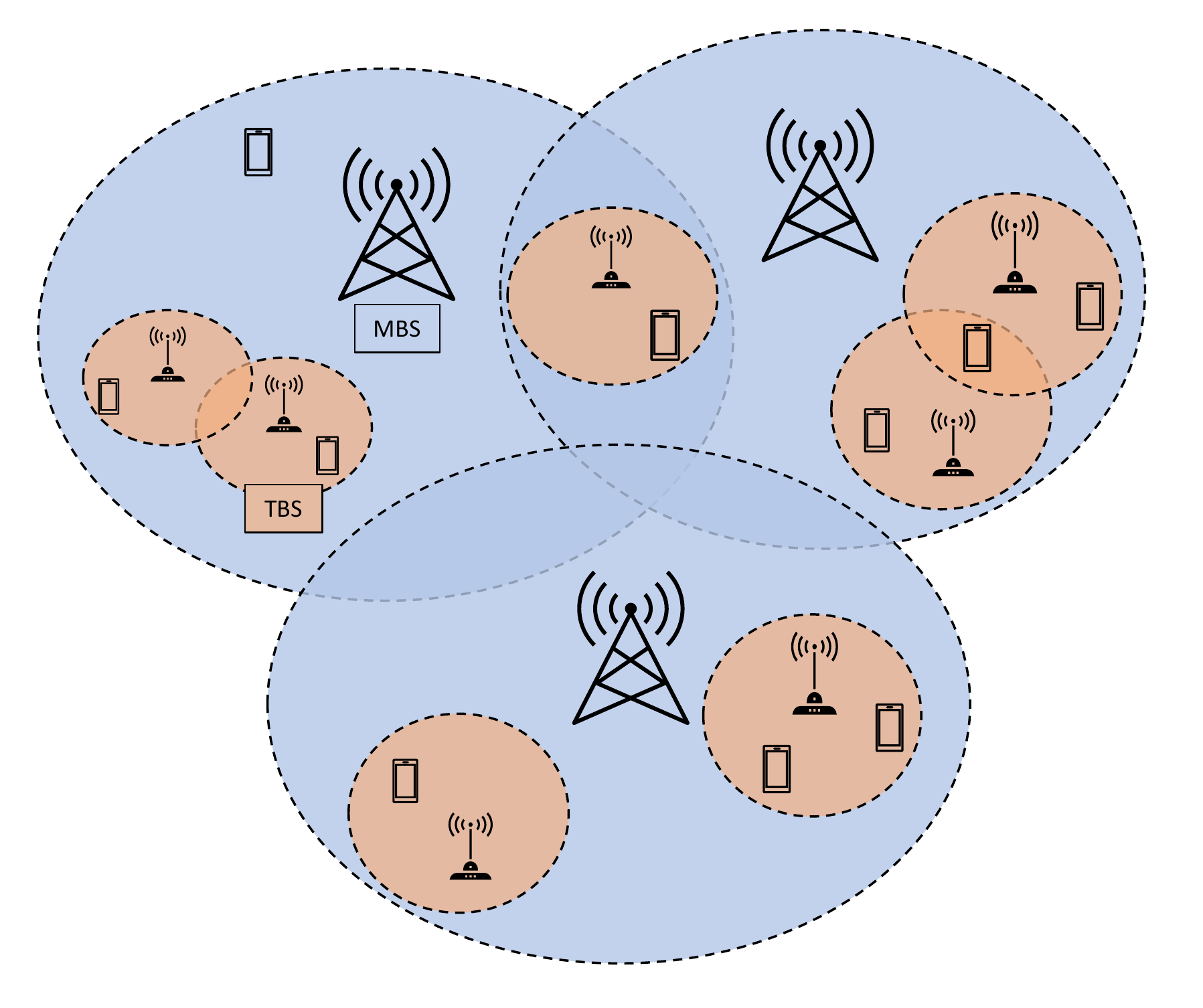}
  \label{system model}
  }\quad
  \subfigure[]
  {\includegraphics[width=0.43\linewidth]{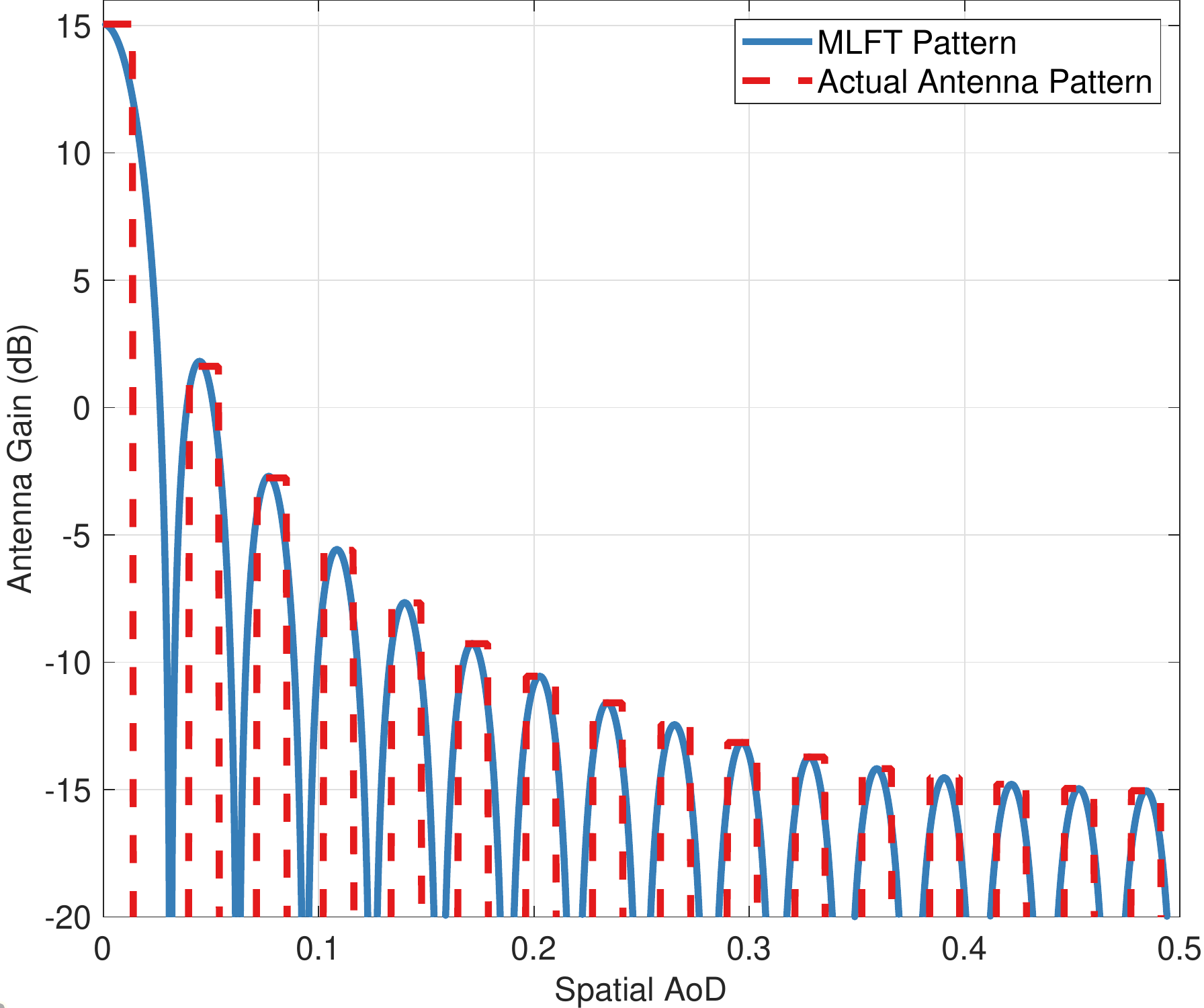}
  \label{antenna pattern}
  }
\caption{(a) System model of the hybrid IoT network consisting of TBSs and MBSs, which are modeled as two independent PPPs, (b) Visualization of the adopted MLFT and actual antenna pattern when $N_t$ = 32.}
\label{system model & antenna pattern}
\end{figure*}

\begin{table}[!htp]
\centering
\caption{NOTATION AND SIMULATION PARAMETERS}\label{tab:parameter}
\begin{tabularx}{0.48\textwidth}{ 
  | >{\raggedright\arraybackslash}c 
  | >{\centering\arraybackslash}c 
  | >{\centering\arraybackslash}X | }
\hline
\textbf{Notation} & \textbf{Parameter} & \textbf{Value(if applicable)}\\
\hline
$ \Phi_T,\lambda_T$ & TBS PPP and Density & $0.05/m^2$\\
\hline
$\Phi_m,\lambda_m$ & MBS PPP and Density & $5\times10^{-4}/m^2$\\
\hline
$x_T,x_m$ & Closest link distance to TBS and MBS &  \\
\hline
$f_T,f_m$ & THz and mmWave acrrier frequency & 350 GHz, 30 GHz\\
\hline
$P_T, P_m$& Transmitting power of TBS and MBS& 73 dBm, 53 dBm \\
\hline
$\sigma^2$ & Noise power for mmWave networks & -85 dBm\\
\hline
$B_T, B_m$& Bias to TBS and MBS& 10, 1 \\
\hline
$\alpha_T, \alpha_m$& Pathloss exponent for TBS and MBS& 4, 2 \\
\hline
$R_T, R_m$& Radius of coverage area for TBS and MBS& 100m, 20m \\
\hline
$G_T,G_m$ & Antenna gain of TBS and MBS &  \\
\hline
$N_T, N_m$& Antenna array size of TBS and MBS& 64, 16 \\
\hline
$M$& Nakagami-$m$ fading parameter& 4 \\
\hline
$T$& Temperature in Kelvin & 296 K \\
\hline
$\xi$& Relative humidity & 0.6 \\
\hline
$p$& Air pressure & 101325 Pa \\
\hline

\end{tabularx}
\end{table}

\subsection{Directional Antenna Pattern} 
For both mmWave and THz links, highly directional antennas are indispensable to combat severe propagation loss. The commonly adopted antenna pattern model for theoretical analysis is the simplified flat-top antenna model, which binarized the antenna gain into two levels to improve tractability but at the cost of a low approximation accuracy and failure to reveal the diverse impact of the directional antenna pattern \cite{yu2017coverage}. To resolve this issue, we utilized the multi-level flat-top (MLFT) antenna pattern model, which was initially proposed in \cite{wei2018simple} and demonstrated to achieve a close approximation to the actual antenna pattern (illustrated in Fig. \ref{antenna pattern}) and coverage probability, while maintaining similar tractability level as the simplified flat-top model. The antenna gain function $G\left(\varphi\right)$ of the MLFT model in this paper is given by 
\begin{equation}
  G\left(\varphi\right) = 
  \begin{dcases}
    &G_k, \quad \text{if } \varphi_k - \frac{\psi}{2} \leq |\varphi| < \varphi_k + \frac{\psi}{2},\\
    &0, \quad \text{otherwise},
  \end{dcases}
  \label{eqn:MLFT} 
  \end{equation}
where $N_t$ is the number of antenna elements, $K = \bigl \lfloor \frac {N_t}{2}\bigr \rfloor$, $1 \leq k \leq K$, $\varphi$ is the cosine direction corresponding to the angle of departure (AoD), $\varphi_1 = \frac{\psi}{2}$, $G_1 = N_t$, $\varphi_k = \frac{2 k - 1}{2 N_t}$ and $G_{k} = G_{\mathrm {act}}(\varphi _{k})$ for $k \geq 2$, and 
$\psi$ is the half-power beamwidth (HPBW) defined by the actual antenna pattern $G_{\text{act}}\left(\varphi\right)$ as follows: 
\begin{equation} 
  G_{\text{act}}\left(\varphi\right) = 
  \frac {\sin^{2}(\pi N_t \varphi)}{N_t \sin^{2}(\pi \varphi)}, \quad G_{\text{act}}\left(\psi\right) = \frac{N_t}{2}.
  \label{eqn:actual_ULA}
  \end{equation}
We assumed that all TBSs and MBSs adopted the MLFT antenna pattern with $N_T$ and $N_m$ elements, respectively, and the BSs serve the associated UE with perfect beam alignment.

\subsection{Propagation Model}
For a mmWave link, the received signal power at the typical UE from an MBS at location $x_m$ can be expressed as $P_m G_m(\phi_{m})hL_m(x_{m})$, where $P_m$ is the transmitting power, $G_m(\phi_m)$ is the antenna gain of MBS with AoD $\phi_m$, $h$ denotes the small scale fading. The pathloss term $L_m(x_{m})= \left({c}/{4\pi f_m}\right)^2 x_{m}^{-\alpha_m}$ is composed of the pathloss exponent $\alpha_m$, the speed of light $c$, and operating frequency $f_m$. We assume Rayleigh fading to model the small scale fading over mmWave link, which is a widely adopted assumption \cite{elshaer2016downlink}. 

In contrast, THz propagation undergoes severe attenuation, including propagation loss, absorption loss, molecular absorption noise, and Johnson-Nyquist noise \cite{jornet2011channel, petrov2017interference, wu2019interference}. We denote the propagation loss over the THz link as $L_{P}(x_T)$, the molecular absorption loss as $L_{A}(x_T)$, the absorption noise as $N_{A}$, and the Johnson-Nyquist noise as $N_{JN}$, where
\begin{equation}
    \begin{split}
        L_{P}(x_T) &= \left(\frac{c}{4\pi f_T}\right)^2 x_T^{-\alpha_T}, \quad L_{A}(x_T) = \mathrm{e}^{-k_a(f_T) x_T},\\
        N_{A} &= P_T G_T L_{P}(x_T) \left(1 - L_{A}(x_T)\right),\\
        N_{JN} &= \frac{\hbar f_T}{{exp}\left(\hbar f_T /k_B T\right)-1},
    \end{split}
    \label{eqn:thz-secIII-c-1}
\end{equation}
$f_T$ is the operating frequency of the THz link, $\alpha_T$ is the pathloss exponent, $\hbar$ denotes the Planck's constant, $k_{B}$ is the Boltzmann constant and $T$ represents the temperature in Kelvin. The Johnson-Nyquist noise remains constant until 0.1 THz, then declines non-linearly. The molecular absorption loss $L_{A}(x_T)$ can be obtained by referring the HITRAN database \cite{rothman2014hitran} or using a simplified absorption model proposed in \cite{kokkoniemi2018simplified} that works for frequencies in the $275$-$400$ GHz band. In this work, we adopted the simplified adsorption model \cite{kokkoniemi2018simplified}. 

The received signal power at the typical UE from a TBS at location $x_T$ is evaluated as $S = P_T G_T(\phi_T) L_A\left(x_T\right) L_P\left(x_T\right) g$, where $G_T(\phi_T)$ is the antenna gain with AoD $\phi_T$, $g$ denotes the small scale fading. In \cite{wu2019interference} and \cite{wu2020interference}, Nakagami-$m$ fading has been demonstrated as a tentative model to approximate the small scale fading of THz link. We follow this approach and  model the power fading coefficient $g$ as a gamma distribution $g \sim \text{Gamma}\left(M, 1/M\right)$ with parameter $M$. 

Therefore, the received $\tt SINR$ at the typical user in mmWave links can be expressed as follows
\begin{equation}
    {\tt SINR}_m {=} \frac{P_m G_{m, 1} h_0 L_m(x_{m})}{\sigma_m^2 + I_m} 
    {=} \frac{P_m N_m h_0 \left(\frac{c}{4\pi f_m}\right)^2 {x_{m}}^{-\alpha_m}}{\sigma_m^2 + I_m}, 
    \label{eqn:SINR_mmWave}
\end{equation}
where $x_m$ is the distance of desired mmWave link, $\sigma_m^2$ refers to the noise power in mmWave link, $I_m = \sum_{i\in \Phi_m / o}P_mG_m(\phi_{m,i})h_i ({c}/{4\pi f_m})^2 x_{m,i}^{-\alpha_m}$ is the interference, and the second equality follows by the assumption of perfect beam alignment. Similarly, the received $\tt SINR$ at the typical user in THz links can be expressed as follows
\begin{equation}
    {\tt SINR}_T = \frac{S}{\sigma_T^2 + I_T} = \frac{P_T N_T e^{-k_a(f_T)x_T}g_0\left(\frac{c}{4\pi f_T}\right)^2 x_T^{-\alpha_T}}{\sigma_T^2 + I_T},
    \label{eqn:SINR_THz}
\end{equation}
where $S$ is the received signal power, $\sigma_T^2$ is the noise power in THz link, and the second equality follows by the assumption of perfect beam alignment. $I_T$ is the aggregated interference, which can be expressed as below.
\begin{equation}
\begin{split}
    S &= P_T N_T e^{-k_a(f_T)x_T}\left(\frac{c}{4\pi f_T}\right)^2 g_0 x_T^{-\alpha_T}\\
    \sigma_T^2 &= N_{JN} + \sum_{i\in \Phi_{T \backslash o}} G_T(\phi_{T,i}) P_T \left(1 - L_A\left(x_{T,i}\right) \right)L_P\left(x_{T,i}\right) g_i\\
    I_T &= \sum_{i\in \Phi_{T \backslash o}} G_T(\phi_{T,i})~ P_T L_A\left(x_{T,i}\right) L_P\left(x_{T,i}\right) g_i.
\end{split}    
\end{equation}

\subsection{Absorption Loss Model}
A unique feature of THz communication is the molecular absorption loss. To model the absorption normally will require many parameters which can be obtained from the spectroscopic databases, such as HITRAN database \cite{rothman2014hitran}. To intuitively reveal the impact of absorption on THz networks, we will adopt a simplified absorption model proposed in \cite{kokkoniemi2018simplified}, which has presented a high accuracy within the 275 – 400 GHz frequency band, which is also regarded as the potential frequency range for future THz communication system. According to \cite{slocum2013atmospheric}, this frequency range contains a relatively wide band with a lower absorption coefficient comparing to the higher frequency band. Given the operating frequency $f$ within 275 – 400 GHz, the absorption model is given as
\begin{equation}
    K_a(f) = y_1(f, \mu_{H_2 O}) + y_2(f, \mu_{H_2 O}) + \omega(f),
\end{equation}
where the $y_1(f, \mu_{H_2 O})$, $y_2(f, \mu_{H_2 O})$ and $\omega(f)$ are defined as
\begin{equation}
\begin{split}
    & y_1(f, \mu_{H_2 O}) = \\
    & \frac{0.2205\mu_{H_2 O}(0.1303\mu_{H_2 O} + 0.0294)}{(0.4093\mu_{H_2 O} + 0.0925)^2 + \left(\frac{f}{100c}-10.835\right)^2},
\end{split}    
\end{equation}
\begin{equation}
\begin{split}
    & y_2(f, \mu_{H_2 O}) = \\
    & \frac{2.014\mu_{H_2 O}(0.1702\mu_{H_2 O} + 0.0303)}{(0.537\mu_{H_2 O} + 0.0956)^2 + \left(\frac{f}{100c}-12.664\right)^2},
\end{split}    
\end{equation}
\begin{equation}
    \omega(f) = p_1 f^3 + p_2 f^2 + p_3 f + p_4,
\end{equation}
where $\mu_{H_2 O}$ is the volume mixing ratio of water vapor that is calculated based on relative humidity $\xi$ as follows
\begin{equation}
    \mu_{H_2 O} = \frac{\xi}{100}\frac{p_w^*(T,p)}{p}.
\end{equation}
$\frac{\xi p_w^*(T,p)}{100}$ is the partial pressure of water vapor and $p_w^*(T,p)$ is the saturated water vapor partial pressure for a given pressure $p$ and temperature $T$, which can be estimated by Buck equation \cite{alduchov1996improved} as 
\begin{equation}
    p_w^* = 6.1121(1.0007+3.46\times10^{-6}p)exp\left(\frac{17.502T}{240.94+T}\right).
\end{equation}
$\omega(f)$ is an equalization factor and the coefficients are given as: $p_1 = 5.54 \times 10^{-37}, ~p_2 = -3.94 \times 10^{-25}, ~p_3 = 9.06 \times 10^{-14}$ and $p_4 = -6.36\times10^{-3}$. It has been shown in \cite{kokkoniemi2018simplified} that this simplified absorption model achieved a fairly decent accuracy comparing to the exact absorption loss. In this work, we will utilized this model in standard  atmospheric conditions, i.e temperature of 298 $K$, pressure of 101325 Pa and relative humidity of 0.6. The absorption coefficient calculated by this simplified absorption model is illustrated in Fig. \ref{kaf}. As it shows, within the frequency band of 275 - 400 GHz, there are two peaks of absorption coefficient. According to \cite{slocum2013atmospheric}, the absorption coefficient within the adopted frequency band is relatively lower and more steady comparing to higher frequency band. When the operating frequency is within 500 - 600 GHz, the absorption coefficient will reach the peak of almost 10 $m^{-1}$ \cite{slocum2013atmospheric}. Later we will reveal how the absorption coefficient affects the performance of THz networks.

\begin{figure}[t] 
\centering 
\includegraphics[width=\linewidth]{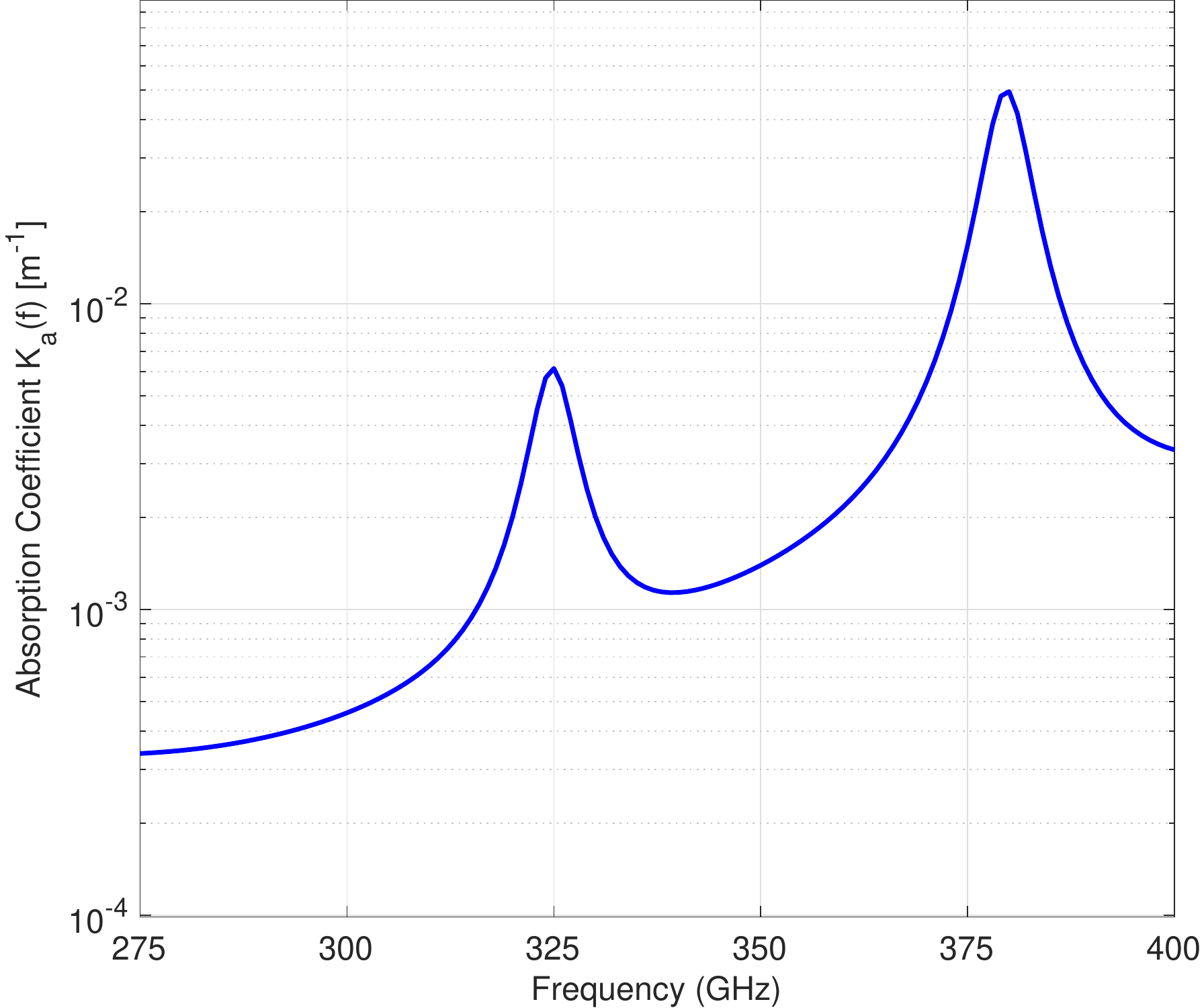}
\caption{Absorption coefficient $K_a(f)$ calculated by simplified absorption model within frequency band of 275 - 400 GHz.} 
\label{kaf} 
\end{figure}

\subsection{Blockage Model}
In this study, we adopt the equivalent LOS ball model proposed in \cite{bai2014coverage}, where all links within a ball of radius $R_B$ are regarded as LOS links, and the links beyond $R_B$ are regarded as none-line-of-sight (NLOS) links. We define $R_T$ and $R_m$ as the radius of LOS balls for THz and mmWave networks. Due to the poor penetrability of THz transmission and negligible impact of NLOS interference in mmWave networks \cite{bai2014coverage}, we assume all the NLOS links are blocked.

\subsection{Maximum Biased Power Association}
In this study, we considered a hybrid IoT network with the coexistence of TBSs and MBSs. Therefore, a UE can have access to either TBS or MBS according to the predefined association strategy, which we considered the Max-BRP association policy \cite{elshaer2016downlink}. This association strategy assumes a UE is associated with the strongest BS in terms of long-term averaged biased received power, and thus fading is averaged out. The intended biased received power from TBS and MBS can be expressed as
\begin{align}
    P_{r,T} &= N_T P_T B_T e^{-k_a(f_T)x_T}\left(\frac{c}{4\pi f_T}\right)^2 x_T^{-\alpha_T}\\
    P_{r,M} &= N_m P_m B_m \left(\frac{c}{4\pi f_m}\right)^2 x_m^{-\alpha_m},
\end{align}
where $B_T$ and $B_m$ are the positive biased factors for TBSs and MBSs tier, respectively. Specifically, when $\frac{B_T}{B_m} > 1$, UEs are more encouraged to associate with TBSs. On the contrary, UEs tend to associate with MBSs if $\frac{B_T}{B_m} < 1$. In particular, the Max-BRP turns into the conventional maximum received power association when $\frac{B_T}{B_m} = 1$.

\begin{thm}
\label{thm:associate_THz}
The probability of the typical UE is associated with THz tier is 
\begin{align}
\mathcal {A}_{\mathrm T} = \int_0^{R_T} f_{x_T}(x) e^{-\pi \lambda_m \left(\varepsilon x^{\alpha_T} e^{k_a(f_T)x}\right)^ {\frac{2}{\alpha_m}}}{\rm d}x,
\label{eqn:associate_THz}
\end{align}
where $\varepsilon = \frac{B_m P_m N_m}{B_T P_T N_T} \left(\frac{f_T}{f_m}\right)^2$ and $f_{x_T}(x) = \frac{2\pi\lambda_T}{1-e^{-\lambda_T \pi  R_T^2}}  x e^{-\pi\lambda_T x^2}$ is the distribution of the distance from typical UE to the nearest TBS. 
\end{thm}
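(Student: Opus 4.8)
The plan is to decompose the association event into a comparison of biased received powers and then exploit the void probability of the MBS point process. Under the Max-BRP policy, the typical UE associates with the THz tier precisely when the strongest biased received power among all TBSs exceeds that among all MBSs. Since both $P_{r,T}$ and $P_{r,M}$ are strictly decreasing in the corresponding link distance (the THz power through the product $x_T^{-\alpha_T} e^{-k_a(f_T) x_T}$, the mmWave power through $x_m^{-\alpha_m}$), the strongest BS in each tier is simply the nearest one. Hence it suffices to compare the nearest TBS at distance $x_T$ against the nearest MBS at distance $x_m$, and the association event reduces to $P_{r,T} > P_{r,M}$ with $x_T,x_m$ the two nearest-neighbor distances.

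First I would condition on the nearest-TBS distance $x_T = x$, which ranges over $(0,R_T)$ because the LOS-ball blockage model treats any TBS beyond $R_T$ as blocked; this is exactly why the integration limit is $R_T$ and why $f_{x_T}(x)$ carries the truncation normalizer $1/(1 - e^{-\lambda_T \pi R_T^2})$. Given $x_T = x$, I would substitute the explicit expressions for $P_{r,T}$ and $P_{r,M}$ into $P_{r,T} > P_{r,M}$ and solve for $x_m$. The frequency-dependent constants $(c/4\pi f_T)^2$ and $(c/4\pi f_m)^2$ combine into the ratio $(f_T/f_m)^2$, and collecting the bias, power, and antenna-gain factors yields the compact constant $\varepsilon = \frac{B_m P_m N_m}{B_T P_T N_T}(f_T/f_m)^2$. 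The condition becomes $x_m^{\alpha_m} > \varepsilon\, x^{\alpha_T} e^{k_a(f_T) x}$, i.e.\ the nearest MBS must lie beyond the threshold radius $r(x) = (\varepsilon\, x^{\alpha_T} e^{k_a(f_T)x})^{1/\alpha_m}$.

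The remaining step is to evaluate the conditional association probability $\mathbb{P}(x_m > r(x))$. Because the MBSs form an independent homogeneous PPP of density $\lambda_m$, the probability that no MBS falls inside the disk of radius $r(x)$ centered at the origin is the void probability $e^{-\pi \lambda_m r(x)^2} = e^{-\pi \lambda_m (\varepsilon x^{\alpha_T} e^{k_a(f_T)x})^{2/\alpha_m}}$. De-conditioning by integrating this against the truncated nearest-TBS density $f_{x_T}(x)$ over $(0,R_T)$ then yields the stated expression for $\mathcal{A}_T$.

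I expect the only genuinely delicate point to be the reduction argument in the first step, namely confirming that comparing the two nearest BSs is equivalent to the full Max-BRP comparison over both entire tiers, together with the bookkeeping of the LOS-ball truncation in the distance distribution. The algebra producing $\varepsilon$ and the invocation of the PPP void probability are routine.
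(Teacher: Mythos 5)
Your proposal is correct and follows essentially the same route as the paper's proof: condition on the nearest-TBS distance, rearrange $P_{r,T}>P_{r,M}$ into a threshold radius for the nearest MBS, apply the PPP void (null) probability, and de-condition against the truncated density $f_{x_T}$. The only difference is that you explicitly justify the reduction from the full Max-BRP comparison to a nearest-versus-nearest comparison via monotonicity of the biased powers in distance, a step the paper's proof takes for granted.
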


\begin{proof}
Provided on the Max-BRP association strategy, the probability of the typical UE is associated with a TBS in the downlink can be calculated as follows:
\begin{align}
\begin{split}
\label{eqn:associate_prob_THz}
\mathcal {A}_{\mathrm T} &= \mathbb {E}_{x_T} \left[\mathbb {P} \left(P_{r,T} > P_{r,m}\right)\right] \\
&= \mathbb {E}_{x_T} \left[\mathbb {P} \left(x_m > \left(\varepsilon  x_T^{\alpha_T} e^{k_a(f_T)x_T}\right)^{\frac{1}{\alpha_m}}\right)\right]\\
&\overset{(a)}{=} \mathbb{E}_{x_T} \left[e^{-\pi \lambda_m \left(\varepsilon x_T^{\alpha_T} e^{k_a(f_T)x_T}\right)^ {\frac{2}{\alpha_m}}}\right]\\
&{=} \int_0^{R_T} f_{x_T}(x) e^{-\pi \lambda_m \left(\varepsilon x^{\alpha_T} e^{k_a(f_T)x}\right)^ {\frac{2}{\alpha_m}}}{\rm d}x,
\end{split}
\end{align}
where $(\ref{eqn:associate_prob_THz}a)$ follows the null probability of 2-D PPP with density of $\lambda$ in a circle with radius $\rho$ is $\mathbb {P} [r \geqslant \rho] = e^{-\pi \lambda \rho^2}$. Naturally, the association probability to MBSs is expressed as
\begin{equation}
   \mathcal {A}_{\mathrm m} = 1 - \mathcal {A}_{\mathrm T}.
   \label{eqn:associate_mmWave}
\end{equation}
Although Theorem \ref{thm:associate_THz} does not provide a closed-form expression, this integration can be efficiently calculated by mathematical software.
\end{proof}

\subsection{Conditioned Distance Distribution}
Denoting $X_k$ as the distance between the typical UE and the associated BS where $k\in \{T,m\}$ standing for THz and mmWave tier, we want to derive the probability density function (PDF) of $X_k$ in this subsection. Different from the distance from typical UE to the nearest BS in a PPP network, $f_{X_k}(\hat{x})$ can be interpreted as the distance from the associated BS to the typical UE conditioned on the typical UE is associated with THz/mmWave tier.
\begin{lemm}
\label{lemm:serving_BS_distance_distribution}
The PDF of distance from the typical UE to its associated BS when it is associated to the THz tier and mmWave tier are as given follows:
\begin{align}
\begin{split}
    &f_{X_T}(\hat{x})=\frac{1}{\mathcal {A}_{\mathrm T}} \left[f_{x_T}(\hat{x}) e^{-\pi \lambda_m \left(\varepsilon \hat{x}^{\alpha_T} e^{k_a(f_T)\hat{x}}\right)^ {\frac{2}{\alpha_m}}}\right]\\
    &f_{X_m}(\hat{x})=\frac{1}{\mathcal {A}_{\mathrm m}} \left[f_{x_m}(\hat{x})e^{-\pi\lambda_T \nu^2(\hat{x})}\right],
\end{split}    
\end{align}
where $\varepsilon$ and $f_{x_T}(\hat{x})$ are defined in Theorem \ref{thm:associate_THz}, $f_{x_m}(\hat{x})$ is the distribution of distance between the typical UE and its closest MBS, $\nu(\hat{x})=\frac{\alpha_T}{k_a(f_T)}W\left(\frac{k_a(f_T)}{\alpha_T}\left[\frac{\hat{x}^{\alpha_m}}{\varepsilon}\right]^{\frac{1}{\alpha_T}}\right)$, $W(z)$ is the Lambert $W$ function, and $\mathcal {A}_{\mathrm T}$ and $\mathcal {A}_{\mathrm m}$ are the association probabilities to THz and mmWave tier derived in Theorem \ref{thm:associate_THz}.

\begin{proof}
See Appendix \ref{appendix:proof_distance_distribution}.
\end{proof}

\end{lemm}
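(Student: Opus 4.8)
The plan is to write each conditional density as the unconditional nearest-BS density times the distance-dependent probability that the corresponding tier wins the Max-BRP comparison, renormalized by the association probability. Concretely, I would invoke the identity $f_{X_k}(\hat{x}) = f_{x_k}(\hat{x})\,\mathbb{P}(\text{tier } k \text{ associated} \mid x_k = \hat{x})/\mathcal{A}_k$ for $k\in\{T,m\}$. This holds because the joint ``density'' of the event that the nearest BS of tier $k$ sits at distance $\hat{x}$ \emph{and} tier $k$ is selected is $f_{x_k}(\hat{x})\,\mathbb{P}(\text{tier } k \text{ associated}\mid x_k=\hat{x})$, whose integral over $\hat{x}$ is precisely the marginal association probability $\mathcal{A}_k$ from Theorem \ref{thm:associate_THz}; dividing by $\mathcal{A}_k$ renormalizes to a conditional PDF.

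For the THz tier, I would simply reuse the computation inside the proof of Theorem \ref{thm:associate_THz}. Given the nearest TBS at distance $\hat{x}$, the UE associates with THz iff $x_m > (\varepsilon \hat{x}^{\alpha_T} e^{k_a(f_T)\hat{x}})^{1/\alpha_m}$, and by the null probability of a 2-D PPP this event has probability $e^{-\pi \lambda_m (\varepsilon \hat{x}^{\alpha_T} e^{k_a(f_T)\hat{x}})^{2/\alpha_m}}$. Multiplying by $f_{x_T}(\hat{x})$ and dividing by $\mathcal{A}_T$ gives the stated $f_{X_T}$ immediately, with no new work required.

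The harder half is the mmWave tier, because I must re-express the selection condition as a constraint on the \emph{TBS} distance $x_T$ rather than on $x_m$. Fixing $x_m=\hat{x}$, the event $P_{r,m}>P_{r,T}$ rearranges to $x_T^{\alpha_T} e^{k_a(f_T)x_T} > \hat{x}^{\alpha_m}/\varepsilon$. Since the left-hand side is continuous and strictly increasing in $x_T>0$ (a product of the increasing factors $x_T^{\alpha_T}$ and $e^{k_a(f_T)x_T}$), this inequality is equivalent to $x_T>\nu(\hat{x})$, where $\nu(\hat{x})$ is the unique root of $x_T^{\alpha_T} e^{k_a(f_T)x_T}=\hat{x}^{\alpha_m}/\varepsilon$. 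The key step is to solve this transcendental equation in closed form: raising to the power $1/\alpha_T$, multiplying through by $k_a(f_T)/\alpha_T$, and substituting $u=\frac{k_a(f_T)}{\alpha_T}\nu$ brings it to the canonical form $u e^u = z$ with $z=\frac{k_a(f_T)}{\alpha_T}(\hat{x}^{\alpha_m}/\varepsilon)^{1/\alpha_T}$, whose solution $u=W(z)$ yields exactly the claimed $\nu(\hat{x})$. The probability of $x_T>\nu(\hat{x})$, i.e. no TBS inside radius $\nu(\hat{x})$, is then $e^{-\pi\lambda_T \nu^2(\hat{x})}$ by the PPP void probability, and normalizing by $\mathcal{A}_m$ completes the proof.

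I expect the Lambert-$W$ inversion, together with the monotonicity argument certifying that $x_T>\nu(\hat{x})$ is both necessary and sufficient for mmWave association, to be the main obstacle; the remaining manipulations are routine bookkeeping with standard void probabilities of the PPP.
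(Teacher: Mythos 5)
Your proposal is correct and follows essentially the same route as the paper's Appendix A: both decompose the joint event that the nearest tier-$k$ BS lies at distance $\hat{x}$ and that tier $k$ wins the Max-BRP comparison, apply the PPP void probability, invert $x_T^{\alpha_T}e^{k_a(f_T)x_T}=\hat{x}^{\alpha_m}/\varepsilon$ via the Lambert $W$ function exactly as you describe, and normalize by $\mathcal{A}_k$ (the paper merely phrases this through the complementary CDF and then differentiates, which is equivalent to your direct density identity). Your explicit monotonicity justification for the equivalence $x_T>\nu(\hat{x})$ is a detail the paper leaves implicit.
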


\section{Coverage Analysis}
In this section, we will elaborate on the mathematical framework to analyze the system. We first analyze the $\tt SINR$ coverage of TBS-only networks and MBS-only networks, respectively. The closed-form expression of Laplace transforms of the interference of THz and mmWave networks will be derived. Furthermore, the coverage probability, which is defined as the probability of $\tt SINR$ achieving the desired threshold, will be provided to evaluate the network performance.

Given the per-tier association probability of the typical UE which is provided in Theorem \ref{thm:associate_THz}, the coverage probability of the hybrid IoT networks can be expressed as
\begin{equation}
    \mathcal P_C(\tau) = \mathcal {A}_{\mathrm T}  \mathcal P_{C_T}(\tau) + \mathcal {A}_{\mathrm m} \mathcal P_{C_m}(\tau),
    \label{eqn:network_coverage_probability}
\end{equation}
where $\tau$ is the desired threshold, $\mathcal {A}_{\mathrm T}$ and $\mathcal {A}_{\mathrm m}$ has been given in Theorem \ref{thm:associate_THz}, and $\mathcal P_{C_T}(\tau)$ and $\mathcal P_{C_m}(\tau)$ are the coverage probabilities conditioned on that typical UE is associated with THz networks and mmWave networks, respectively. Therefore, to evaluate the aforementioned conditioned coverage probability, we first analyze the coverage probabilities of TBSs and MBSs networks in stand alone deployment manner.

\subsection{Coverage Probability for THz Networks}
The coverage probability of THz networks is defined as
\begin{align}
\begin{split}
\mathbb {C}_T\left(\tau\right) = \mathbb {E} \left[ \mathbb {P}\left(\tt SINR \geq \tau\right) \right]
\end{split}
\end{align}
As mentioned before, a major obstacle in analyzing THz networks is the interference distribution analysis. Instead of utilizing different distributions to approximate the interference distribution, we utilize the Alzer's inequality \cite{alzer1997some} described below to approximate the coverage probability.
\begin{lemm}
\label{lemm:Alzer inequality}
(From \cite{alzer1997some}) Given $g$ is a normalized Gamma random variable with parameter $M$, the probability $\mathbb {P} (g < \gamma)$ is tightly upper bounded by
\begin{align}
    \mathbb {P} (g < \gamma) < (1 - e^{a\gamma})^M,
\end{align}
where $a = M(M!)^{-\frac{1}{M}}$.
\end{lemm}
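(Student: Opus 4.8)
The plan is to establish the tight upper bound $\mathbb{P}(g < \gamma) < (1 - e^{a\gamma})^M$ for a normalized Gamma random variable $g \sim \text{Gamma}(M, 1/M)$ by exploiting the relationship between the Gamma CDF and its approximation via a product form. Since this result is attributed to Alzer \cite{alzer1997some}, my approach would be to reconstruct the core analytic argument rather than merely cite it. First, I would recall that the CDF of a normalized Gamma random variable with integer (or general) parameter $M$ is the regularized lower incomplete gamma function, namely $\mathbb{P}(g < \gamma) = \frac{\gamma(M, M\gamma)}{\Gamma(M)}$, where I am using $\gamma(\cdot,\cdot)$ for the lower incomplete gamma function. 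The goal is to bound this by a clean expression involving an exponential raised to the power $M$.

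\textbf{Key steps.} The central idea is to introduce the auxiliary function
\begin{equation}
\phi(\gamma) = \left(1 - e^{a\gamma}\right)^M - \frac{\gamma(M, M\gamma)}{\Gamma(M)},
\label{eqn:alzer-aux}
\end{equation}
noting that the constant $a = M(M!)^{-1/M}$ is chosen precisely so that the two sides agree to leading order as $\gamma \to 0^+$. I would then show $\phi(\gamma) > 0$ for all $\gamma > 0$ by examining boundary behavior and monotonicity: at $\gamma = 0$ both terms vanish, and as $\gamma \to \infty$ both approach $1$, so the inequality is an interior statement. The crux is to compare the Taylor/Maclaurin expansions of both sides near the origin; the specific value of $a$ forces the first nonvanishing coefficients to match, and one verifies that the remaining higher-order terms make the product form dominate. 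An alternative, cleaner route is to write both sides via their series representations and appeal to a term-by-term coefficient comparison, showing that $(1 - e^{a\gamma})^M$ has systematically larger partial sums.

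\textbf{Main obstacle.} The hard part will be rigorously controlling the sign of $\phi(\gamma)$ across the \emph{entire} positive axis rather than just locally near $0$ or $\infty$. A naive derivative computation leads to expressions mixing $e^{a\gamma}$, binomial terms, and the incomplete gamma integrand, which do not obviously have a fixed sign. The standard resolution, which I would pursue, is to reduce the problem to a statement about log-concavity or to apply a convexity argument to a suitably transformed function, thereby ruling out interior sign changes. Since the lemma is only invoked as a tool and is borrowed verbatim from \cite{alzer1997some}, in the interest of the paper's flow I expect the cleanest presentation is to state the result with attribution and defer the delicate monotonicity analysis to the cited reference, emphasizing only the role of the calibrated constant $a$ that guarantees tightness.
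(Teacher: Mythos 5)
The paper itself offers no proof of this lemma: it is quoted with attribution from Alzer and used downstream only as an approximation (note the ``$\simeq$'' in Eq.~(22a)), so your closing instinct to state it with a citation matches what the paper does. Your calibration of the constant is also the right observation: with $\mathbb{P}(g<\gamma)=\gamma(M,M\gamma)/\Gamma(M)\sim (M\gamma)^M/\Gamma(M+1)$ and $(1-e^{-a\gamma})^M\sim(a\gamma)^M$ as $\gamma\to 0^+$, the choice $a=M(M!)^{-1/M}$ is exactly the one that matches leading coefficients. (Incidentally, the exponent must be $e^{-a\gamma}$, not $e^{a\gamma}$; your boundary claim that both sides tend to $1$ already presupposes the corrected sign, which is the form the paper actually uses in Eq.~(22).)

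The genuine gap is the step where you assert that the remaining higher-order terms ``make the product form dominate'': they do not, and carrying out your own program would reveal it. Already at second order near the origin the product form falls \emph{below} the CDF for $M>1$, and a direct check at $M=2$, $\gamma=1$ gives $\mathbb{P}(g<1)=1-3e^{-2}\approx 0.594$ versus $\bigl(1-e^{-\sqrt{2}}\bigr)^2\approx 0.573$, violating the claimed upper bound. What Alzer actually proves for $p>1$ is the two-sided estimate $\bigl(1-e^{-[\Gamma(1+p)]^{-1/p}x}\bigr)^p<\gamma(p,x)/\Gamma(p)<\bigl(1-e^{-x}\bigr)^p$, i.e., $a=M(M!)^{-1/M}$ is the sharp constant for the \emph{lower} bound (the sharp upper bound corresponds to $a=M$); the ``tightly upper bounded'' phrasing is inherited from the mmWave literature and should be read as ``tightly approximated.'' This does not damage the paper's results, which only use the expression as an approximation, but it means the inequality as literally stated cannot be proved by your route (nor by any other). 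Finally, the global sign control you rightly flag as the main obstacle is settled in Alzer's paper by a monotonicity analysis of an auxiliary ratio, not by term-by-term coefficient comparison or log-concavity, so even for the correctly oriented inequality your sketch leaves the essential analytic content to the reference.
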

According to Lemma \ref{lemm:Alzer inequality}, we can approximate the coverage probability for THz networks as
\begin{align}
\label{eqn:THz_coverage_1}
  \begin{split}
    & \mathbb {C}_T\left(\tau\right) = \mathbb {E} \left[ \mathbb {P}\left(\tt SINR_T \geq \tau\right) \right] \\
    & = \mathbb {E}_{\hat{J}, x_T} \left[ \mathbb {P}\left(g_0 \geq \frac{\tau \left(\hat{J} + \hat{N}\right)}{\hat{S}}\right) \right]\\
    &\overset{(a)}{\simeq} 1- \mathbb{E}_{\hat{J}, x_T}\left[ \left( 1- e^{-a P_1(x_T) \left(\hat{J} + \hat{N}\right)}\right)^M \right]\\
    &\overset{(b)}{=}\sum_{n=1}^{M} \binom{M}{n}\left( -1\right)^{n+1} \mathbb{E}_{\hat{J}, x_T}\left[e^{-P_n(x_T) \left(\hat{J} + \hat{N}\right)}
    \right],
  \end{split}
\end{align}
where $\tau$ is the $\tt SINR$ threshold, $g_0$ is the power fading coefficient of the desired link, and $\hat{S} = e^{-k_a(f_T) x_T} x_T^{-\alpha_T}~ g_0$, $\hat{J} = \sum_{i=1}^{N} x_{T,i}^{-\alpha_T} g_i \hat{G}_i$, $\hat{N} = N_{JN}\left(P_T N_T \left(c/4\pi f_T\right)^2\right)^{-1}$, and $\hat{G}_i = G_{T,i}/N_T$. We define $P_1(x_T) \triangleq \tau~ x_T^{\alpha_T} e^{k_a(f_T) x_T}$, $P_n(x_T) \triangleq a n P_1(x_T)$ and Alzer's inequality is applied in (\ref{eqn:THz_coverage_1}a) with $a = M\left(M! \right)^{-\frac{1}{M}}$, and (\ref{eqn:THz_coverage_1}b) follows the generalized binominal expansion. Therefore, $\mathbb {C}_T$ can be further derived as Theorem \ref{lemm:coverage_probability}.

\begin{thm}
\label{lemm:coverage_probability}
The coverage probability of TBS-only networks given the threshold $\tau$ is expressed as:
\begin{align}
    \mathbb {C}_T(\tau) {=} \sum_{n=1}^{M} \binom{M}{n}\left( -1\right)^{n+1} \int_{0}^{R_T} \mathcal{I}_T(x)  {\rm d}x,
    \label{eqn:coverage_probability}
\end{align}
where $M$ is the Nakagami-$m$ fading parameter and $\mathcal{I}(x)$ is derived in Appendix \ref{appendix:proof_lemma_coverage_probability} as $\mathcal{I}_T(x) = f_{x_T}(x) e^{-P_n(x) \hat{N}}\mathcal{L}_{\hat{J}}\left( P_n(x)\right)$, where $\mathcal{L}_{\hat{J}}\left( s \right)$ denotes the Laplace transform of $\hat{J}$.

\begin{proof}
See Appendix \ref{appendix:proof_lemma_coverage_probability}.
\end{proof}

\end{thm}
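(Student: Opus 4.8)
The plan is to begin from the Alzer-approximated form already obtained in (\ref{eqn:THz_coverage_1}b), where $\mathbb{C}_T(\tau)$ has been reduced to the finite alternating sum $\sum_{n=1}^{M}\binom{M}{n}(-1)^{n+1}\mathbb{E}_{\hat{J},x_T}[e^{-P_n(x_T)(\hat{J}+\hat{N})}]$. Since the outer summation is finite, it suffices to evaluate the joint expectation $\mathbb{E}_{\hat{J},x_T}[e^{-P_n(x_T)(\hat{J}+\hat{N})}]$ for a fixed $n$ and then reinstate the binomial weights and the sign $(-1)^{n+1}$ at the end. I would compute this joint expectation by conditioning on the serving distance $x_T$ and applying the tower property, writing $\mathbb{E}_{\hat{J},x_T}[\,\cdot\,]=\mathbb{E}_{x_T}\big[\mathbb{E}_{\hat{J}\mid x_T}[\,\cdot\,]\big]$.

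First I would note that, conditioned on $x_T$, the noise exponent is deterministic: $\hat{N}=N_{JN}(P_T N_T(c/4\pi f_T)^2)^{-1}$ is a fixed constant and $P_n(x_T)=anP_1(x_T)$ is a fixed function of the serving distance, so the factor $e^{-P_n(x_T)\hat{N}}$ pulls out of the inner expectation, leaving $\mathbb{E}_{\hat{J}\mid x_T}[e^{-P_n(x_T)\hat{J}}]$. By definition this inner expectation is the Laplace transform of the normalized interference $\hat{J}$ evaluated at $s=P_n(x_T)$, i.e. $\mathcal{L}_{\hat{J}}(P_n(x_T))$. Taking the remaining expectation over $x_T$ against its truncated density $f_{x_T}$, and recalling that the serving (nearest) TBS lies inside the LOS ball of radius $R_T$, the outer expectation becomes $\int_0^{R_T} f_{x_T}(x)\,e^{-P_n(x)\hat{N}}\,\mathcal{L}_{\hat{J}}(P_n(x))\,\mathrm{d}x=\int_0^{R_T}\mathcal{I}_T(x)\,\mathrm{d}x$, with $\mathcal{I}_T$ exactly as stated. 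Restoring the alternating binomial sum and interchanging it with the integral (licit because the sum is finite) yields the claimed expression.

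The genuinely hard part is not this bookkeeping but obtaining $\mathcal{L}_{\hat{J}}(s)$ itself, which the theorem keeps symbolic. To make it explicit I would treat $\hat{J}=\sum_i x_{T,i}^{-\alpha_T}g_i\hat{G}_i$ as a shot-noise functional over the interfering TBSs $\Phi_T\setminus\{x_T\}$ and apply the probability generating functional (PGFL) of the PPP, giving $\mathcal{L}_{\hat{J}}(s)=\exp\big(-\lambda_T\int(1-\mathbb{E}_{g,\hat{G}}[e^{-s\,r^{-\alpha_T}g\hat{G}}])\,\mathrm{d}A\big)$ over the relevant region. Three nested averages must then be resolved: the expectation over the Nakagami power fading $g\sim\mathrm{Gamma}(M,1/M)$, which yields the closed-form factor $(1+s\,r^{-\alpha_T}\hat{G}/M)^{-M}$; the average over the discrete antenna-gain levels $\hat{G}=G_{T,k}/N_T$ prescribed by the MLFT pattern (\ref{eqn:MLFT}), weighted by the associated beamwidth probabilities; and the radial integral over the interference region. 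The two subtleties I would watch are that the interferers occupy only the annulus between the nearest-neighbour distance $x_T$ and the LOS-ball radius $R_T$ (so $\mathcal{L}_{\hat{J}}$ depends on $x_T$ both through its argument $P_n(x_T)$ and through the lower integration limit, with NLOS links blocked), and that the THz absorption term entering each interferer's contribution makes the radial integral non-standard; resolving that integral in closed form is the principal analytical obstacle and is presumably where the closed-form Laplace transform advertised in the abstract is secured (details deferred to Appendix \ref{appendix:proof_lemma_coverage_probability}).
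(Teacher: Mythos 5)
Your reduction of the theorem itself is exactly the paper's route: starting from the binomial-expanded Alzer bound in (\ref{eqn:THz_coverage_1}), conditioning on $x_T$, pulling the deterministic factor $e^{-P_n(x)\hat{N}}$ out of the inner expectation, recognizing $\mathbb{E}_{\hat{J}}[e^{-P_n(x)\hat{J}}]=\mathcal{L}_{\hat{J}}(P_n(x))$, and integrating against $f_{x_T}$ over $[0,R_T]$. Your sketch of the Laplace transform also matches the appendix: PGFL of the PPP over the annulus $(x,R_T)$, the Gamma moment generating function giving $(1+s\hat{G}t^{-\alpha_T}/M)^{-M}$, and the average over the discrete MLFT gain levels (each level carrying probability $2\psi_T$, plus a zero-gain level with probability $1-2K_T\psi_T$, which is what produces the $\frac{K_T}{2}(x^2-R_T^2)$ term in (\ref{THz_laplace_2})).

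The one genuine misstep is your closing claim that the molecular absorption factor $e^{-k_a(f_T)x_{T,i}}$ enters each interferer's contribution and makes the radial integral the "principal analytical obstacle." It does not, and you in fact quote the correct definition $\hat{J}=\sum_i x_{T,i}^{-\alpha_T}g_i\hat{G}_i$ earlier in your own argument. The reason no absorption term survives is a cancellation built into the model: the denominator of ${\tt SINR}_T$ is $\sigma_T^2+I_T$, where the molecular absorption noise contributes $\sum_i G_{T,i}P_T\left(1-L_A(x_{T,i})\right)L_P(x_{T,i})g_i$ and the interference contributes $\sum_i G_{T,i}P_T L_A(x_{T,i})L_P(x_{T,i})g_i$; their sum is $\sum_i G_{T,i}P_T L_P(x_{T,i})g_i$, a pure power-law shot noise. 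Consequently the radial integral in $\chi_T(s)$ is the standard one and closes as the Gauss hypergeometric expression in (\ref{THz_laplace_3}); the obstacle you defer to does not arise. (The absorption loss of the \emph{desired} link does survive, but only through the deterministic factor $e^{k_a(f_T)x}$ inside $P_n(x)$, which you handled correctly.)
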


The integration of (\ref{THz_laplace_3}) yields the closed-form expression of the Laplace transform of the aggregated interference for TBS-only networks. Since the expression of the coverage probability contains only a single integration, the theoretical result can be efficiently evaluated by numerical integration.

\subsection{Coverage Probability for mmWave Networks}
Without loss of generality, we assume $h \sim exp(1)$. Combining the MLFT antenna pattern, the coverage probability for mmWave networks can be calculated similarly as below.

\begin{thm}
\label{lemm:mmwave_coverage_probability}
The coverage probability of MBS-only networks is expressed as:
\begin{align}
    \mathbb {C}_m(\tau) {=} \int_0^{R_m} f_{x_m}(x) e^{-\tau x^{\alpha_m} \sigma^2} \mathcal{L}_{\hat{I}_m}(\tau x^{\alpha_m}) {\rm d}x,
\end{align}
where $\tau$ is the $\tt SINR$ threshold for the MBS-only networks and $\sigma^2 = \frac{\sigma^2_m}{PmNmL_0}$, $f_{x_m}(x) = \frac{2\pi\lambda_m}{1-e^{-\lambda_m \pi R_m^2}}  x e^{-\pi\lambda_m x^2}$, $\hat{I}_m = \sum_{i\in \Phi_m / o}\hat{g_i}h_i x_{m,i}^{-\alpha_m}$, $\hat{g_i}=G_{m,i}/N_m$. \ref{appendix:proof_lemma_coverage_probability}.
\end{thm}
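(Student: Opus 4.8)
The plan is to follow the same line of reasoning used for the THz case in Theorem \ref{lemm:coverage_probability}, but to exploit the Rayleigh-fading assumption $h \sim \exp(1)$ so that the approximation via Lemma \ref{lemm:Alzer inequality} can be replaced by an exact identity. Writing $L_0 = (c/4\pi f_m)^2$ for the near-field constant and starting from the definition $\mathbb{C}_m(\tau) = \mathbb{E}[\mathbb{P}({\tt SINR}_m \geq \tau)]$, I would substitute the SINR expression from \eqref{eqn:SINR_mmWave} and rearrange the coverage event ${\tt SINR}_m \geq \tau$ into a threshold condition on the desired-link fading power, namely $h_0 \geq \tau x_m^{\alpha_m}(\sigma_m^2 + I_m)/(P_m N_m L_0)$.

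Conditioning on the serving distance $x_m$ and the interference $I_m$, the complementary CDF of the exponential variable $h_0$ gives $\mathbb{P}(h_0 \geq t) = e^{-t}$ \emph{exactly}, which is the key step making the result closed rather than an upper bound. This produces a product of two exponential factors, one in the normalized noise $\sigma^2 = \sigma_m^2/(P_m N_m L_0)$ and one in the normalized interference. Since $I_m/(P_m N_m L_0) = \sum_i (G_{m,i}/N_m)\, h_i\, x_{m,i}^{-\alpha_m} = \hat{I}_m$, the conditional coverage probability factors as $e^{-\tau x_m^{\alpha_m}\sigma^2}\,\exp(-\tau x_m^{\alpha_m}\hat{I}_m)$.

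Next I would take the expectation over the interference field, invoking the independence between $h_0$ and the interferers' contributions (the sum runs over $\Phi_m \setminus o$). The expectation $\mathbb{E}[\exp(-\tau x_m^{\alpha_m}\hat{I}_m)]$ is, by definition, the Laplace transform $\mathcal{L}_{\hat{I}_m}(s)$ evaluated at $s = \tau x_m^{\alpha_m}$, which absorbs all of the spatial averaging over the PPP, the MLFT antenna-gain levels, and the interferer fading. De-conditioning on the serving distance by integrating against $f_{x_m}(x)$ then yields the claimed single integral; the upper limit $R_m$ and the normalization $1 - e^{-\lambda_m \pi R_m^2}$ appearing in $f_{x_m}$ follow directly from the LOS-ball blockage model, which forces the serving MBS to lie inside the ball and conditions on at least one MBS being present there.

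The calculations here are light, so the only substantive content is bookkeeping, and I expect the genuine obstacle to lie one level deeper, in the closed-form evaluation of $\mathcal{L}_{\hat{I}_m}$ itself (left symbolic in the statement). That evaluation requires applying the probability generating functional of the PPP and integrating the Rayleigh-fading moment generating function against the discrete MLFT antenna pattern of \eqref{eqn:MLFT} over the LOS ball, mirroring the THz interference transform derived in Appendix \ref{appendix:proof_lemma_coverage_probability}; the care needed there is in tracking the antenna-level weights $\hat{g}_i = G_{m,i}/N_m$ together with the angular probabilities that each interferer presents a given gain level.
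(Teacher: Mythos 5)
Your proposal follows essentially the same route as the paper: condition on the serving distance and interference, use the exact exponential CCDF of $h_0$ to factor the coverage event into a noise term $e^{-\tau x^{\alpha_m}\sigma^2}$ and the Laplace transform $\mathcal{L}_{\hat{I}_m}(\tau x^{\alpha_m})$, then de-condition against $f_{x_m}(x)$ over the LOS ball. The only difference is that the paper goes on to evaluate $\mathcal{L}_{\hat{I}_m}$ explicitly via the PGFL and the MLFT gain levels (arriving at a ${}_2F_1$ expression), exactly as you sketch in your final paragraph, so there is no substantive gap.
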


\begin{proof}
\begin{equation}
  \begin{split}
   &\mathbb {C}_m\left(\tau\right) =  \mathbb {E} \left[ \mathbb {P}\left(\tt SINR \geq \tau\right) \right]\\
   &= \mathbb {E}_{x_m,\hat{I}_m} \left[ \mathbb {P}\left(h_0 \geq \tau x_m^{\alpha_m} \left(\hat{I}_m + \sigma^2\right)\right) \right]\\
   &=\int_0^{R_m} f_{x_m}(x) e^{-\tau x^{\alpha_m} \sigma^2} \mathcal{L}_{\hat{I}_m}(\tau x^{\alpha_m}) {\rm d}x.
  \end{split}
  \label{eqn:mmwave_coverage_1}
\end{equation}
And $\mathcal{L}_{\hat{I}_m}(s)$ is the Laplace transform of $\hat{I}_m$. Given $x_m = x$, $\mathcal{L}_{\hat{I}_m}(s)$ can be calculated as below:
\begin{align}
   &\mathcal{L}_{\hat{I}_m}(s)\notag\\
   &= exp\left(-2\pi\lambda_m \int_x^{R_m} \left(1 - \mathbb{E}_{\hat{g}}\left[ \frac{1}{1+s\hat{g}t^{-\alpha_m}} \right]\right) t{\rm d}t\right)\notag\\
   &=exp\left(4\pi\lambda_m\psi_m \left[\chi_m(s)+ \frac{K_m}{2}(x^2{-}R_m^2)\right]\right),
  \label{eqn:mmwave_laplace_1}
\end{align}
where $\psi_m$ is the half-power beamwidth and $\chi_m(s)$ is expressed as:
\begin{align}
  \begin{split}
    &\chi_m(s) = \int_x^{R_m} \sum_{k=1}^{K_m}\frac{1}{(1 +s\hat{g_k}t^{-\alpha_m})} t{\rm d}t\\
    &=\sum_{k=1}^{K_m} \left[\frac{t^{\alpha_m+2}}{s\hat{g_k}(\alpha_m+2)}~{}_2F_1\left( 1,1+\frac{2}{\alpha_m};2+\frac{2}{\alpha_m};-\frac{t^{\alpha_m}}{s\hat{g_k}} \right) \right]_x^{R_m}.
  \end{split}
  \label{eqn:mmwave_laplace_2}
\end{align}
Similarly, $\mathbb {C}_m(\tau)$ can be easily calculated out by numerical integration. 
\end{proof}

Simplified results can be obtained for some plausible cases as below.

\begin{corollary}
When $\alpha_m=2$ and noise is neglected, the coverage probability $\mathbb {C}_m(\tau)$ can be expressed as
\begin{equation}
    \mathbb {C}_m = \int_0^{R_m} f_{x_m}(x) \mathcal{\hat{I}}_m(x)  {\rm d}x,
\end{equation}
\end{corollary}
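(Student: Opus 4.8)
The plan is to specialize the general mmWave coverage expression of Theorem~\ref{lemm:mmwave_coverage_probability} to the two stated assumptions and carry out the resulting elementary integral, thereby identifying the simplified kernel $\hat{\mathcal{I}}_m(x)$ (which the statement leaves implicitly defined). First I would neglect the noise by setting $\sigma^2 = 0$, so that the factor $e^{-\tau x^{\alpha_m}\sigma^2}$ collapses to $1$ and the coverage probability reduces to $\mathbb{C}_m = \int_0^{R_m} f_{x_m}(x)\,\mathcal{L}_{\hat{I}_m}(\tau x^2)\,{\rm d}x$. Hence $\hat{\mathcal{I}}_m(x)$ should be read as the specialization of $\mathcal{L}_{\hat{I}_m}(\tau x^{\alpha_m})$ at $\alpha_m = 2$, and the whole task becomes evaluating that Laplace transform in closed form.

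Rather than substituting $\alpha_m = 2$ into the hypergeometric form of $\chi_m(s)$ in (\ref{eqn:mmwave_laplace_2}) --- which would require recognizing that ${}_2F_1(1,2;3;z)$ collapses to a logarithm --- I would return to the integral definition of $\chi_m(s)$ in (\ref{eqn:mmwave_laplace_1}) and redo it directly. At $\alpha_m = 2$ the integrand becomes $\frac{t}{1 + s\hat{g}_k t^{-2}} = \frac{t^3}{t^2 + s\hat{g}_k}$, which splits as $t - \frac{s\hat{g}_k\, t}{t^2 + s\hat{g}_k}$ and integrates elementarily to $\frac{t^2}{2} - \frac{s\hat{g}_k}{2}\ln(t^2 + s\hat{g}_k)$. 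Evaluating between $x$ and $R_m$ and summing over $k$ then gives $\chi_m(s) = \sum_{k=1}^{K_m}\bigl[\tfrac{R_m^2 - x^2}{2} - \tfrac{s\hat{g}_k}{2}\ln\tfrac{R_m^2 + s\hat{g}_k}{x^2 + s\hat{g}_k}\bigr]$.

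The decisive step is the assembly of the exponent in $\mathcal{L}_{\hat{I}_m}$. The polynomial piece $\sum_k \frac{R_m^2 - x^2}{2} = \frac{K_m(R_m^2 - x^2)}{2}$ exactly cancels the separate $\frac{K_m}{2}(x^2 - R_m^2)$ term already carried in the exponent of (\ref{eqn:mmwave_laplace_1}), leaving only the logarithmic contribution. Substituting $s = \tau x^2$ and tidying the argument of the logarithm, I would obtain
$$\hat{\mathcal{I}}_m(x) = \exp\!\Bigl(-2\pi\lambda_m\psi_m\,\tau x^2 \sum_{k=1}^{K_m}\hat{g}_k\ln\frac{R_m^2 + \tau x^2\hat{g}_k}{x^2(1 + \tau\hat{g}_k)}\Bigr),$$
which involves no special functions. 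I do not expect a genuine obstacle here; the only places to be careful are bookkeeping the $s = \tau x^2$ substitution consistently (since both $s$ and a bare $x$ appear in the exponent of the Laplace transform) and confirming the exact cancellation of the $K_m(R_m^2 - x^2)/2$ terms. That cancellation is precisely what makes the corollary clean, so verifying it is the single non-mechanical check in the argument.
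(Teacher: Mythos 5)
Your proposal is correct and follows essentially the same route as the paper: set $\sigma^2=0$, re-evaluate $\chi_m(s)$ at $\alpha_m=2$ via the elementary antiderivative $\tfrac{t^2}{2}-\tfrac{s\hat{g}_k}{2}\ln(t^2+s\hat{g}_k)$, and substitute into (\ref{eqn:mmwave_laplace_1}); your final exponential form is algebraically identical to the paper's product form of $\hat{\mathcal{I}}_m(x)$ since $x^2(1+\tau\hat{g}_k)=x^2+\tau x^2\hat{g}_k$. The only difference is that you make explicit the cancellation of the $K_m(R_m^2-x^2)/2$ terms, which the paper leaves implicit.
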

where
\begin{equation}
    \mathcal{\hat{I}}_m(x) = \prod \limits_{k=1}^{K_m} \left(\frac{x^2+\tau x^2\hat{g_k}}{R_m^2+\tau x^2\hat{g_k}}\right)^{2\pi\lambda_m\psi_m\hat{g_k} \tau x^2}.
\end{equation}

\begin{proof}
 If $\alpha_m=2$, $\chi_m(x)$ can be expressed as
\begin{align}
  \begin{split}
    \chi_m(s) &= \int_x^{R_m} \sum_{k=1}^{K_m}\frac{1}{(1 +s\hat{g_k}t^{-2})} t{\rm d}t\\
    &=\sum_{k=1}^{K_m} \left[ \frac{t^2}{2} - \frac{s\hat{g_k}}{2} ln(t^2+s\hat{g_k}) \right]_x^{R_m}.
  \end{split}
  \label{eq:30}
\end{align}
We obtain $\mathcal{\hat{I}}_m(x)$ by plugging (\ref{eq:30}) into (\ref{eqn:mmwave_laplace_1}). 
\end{proof}

\subsection{Coverage Probability for hybrid IoT networks}
With association probability, the coverage probability in stand alone mode and conditioned distance distribution in hand, we can calculate the conditioned coverage probability $\mathcal P_{C_T}$ and $\mathcal P_{C_m}$ simply by substituting the distance distribution $f_{x_T}(x)$ and $f_{x_m}(x)$ in Theorem \ref{lemm:coverage_probability} and Theorem 
\ref{lemm:mmwave_coverage_probability} with the derived conditioned distance distribution $f_{X_T}(\hat{x})$ and $f_{X_m}(\hat{x})$ in Lemma \ref{lemm:serving_BS_distance_distribution}. Therefore, the coverage probability of the hybrid IoT network can be expressed by (\ref{eqn:network_coverage_probability}) as follows:
\begin{align}
    \mathcal P_C(\tau) &{=} \sum_{n=1}^{M} \binom{M}{n}\left( -1\right)^{n+1} \int_{0}^{R_T} {\cal Z}_T(\hat{x}) \mathcal{L}_{\hat{J}}\left( P_n(\hat{x})\right)  {\rm d}\hat{x} \notag\\
    &+ \int_0^{R_m} {\cal Z}_m(\hat{x}) \mathcal{L}_{\hat{I}_m}(\tau \hat{x}^{\alpha_m}) {\rm d}\hat{x},
\end{align}
where ${\cal Z}_T(\hat{x}) = \mathcal {A}_{\mathrm T}f_{X_T}(\hat{x}) e^{-P_n(\hat{x}) \hat{N}}$ and ${\cal Z}_m(\hat{x})= \mathcal {A}_{\mathrm m}f_{X_m}(\hat{x}) e^{-\tau \hat{x}^{\alpha_m} \sigma^2}$.

\section{Spectral efficiency}
In this section, we focus on the spectral efficiency of the hybrid IoT network, which can be calculated as below.
\begin{align}
\begin{split}
\label{eqn:hybrid_network_SE} 
    C &= \sum _{k \in \{T, m\}} \mathcal {A}_{k} C_{k} \\
    &= \sum _{k \in \{T, m\}} \mathcal {A}_{k} \mathbb {E}\left [{ \ln \left ({ 1 + \mathrm {SINR}_{k}(x) }\right) }\right ]
\end{split}
\end{align} 
where $\mathcal {A}_{k}$ is evaluated in Theorem \ref{thm:associate_THz}, $C_{k}$ is spectral efficiency conditioned on the typical user is associated with $k$-th tier. We will first derive the conditioned spectral efficiency of THz and mmWave networks, then the spectral efficiency of the hybrid IoT network will be evaluated by (\ref{eqn:hybrid_network_SE}).

\subsection{Spectral Efficiency of THz Networks}
We first derive $C_{T}$, the conditioned spectral efficiency of the THz tier. In \cite{andrews2011tractable,jo2012heterogeneous}, the spectral efficiency was calculated as a function of the coverage probability. Since in our work, the coverage probability of THz networks was approximated by Alzer's inequality. Therefore, to calculate the accurate expression of spectral efficiency, we utilize the method proposed in \cite{hamdi2007useful}.
\begin{lemm}
\label{lemm:useful technique}
(from \cite{hamdi2007useful}) Suppose $b$ is an constant, $g$ is a Nakagami-$m$ fading coefficient with parameter $M$ which follows Gamma distribution with unit mean, $y$ is an arbitrary random variable and $g$ and $y$ are independent, the explicit expression of $\mathbb{E}[ln\left(1 + \frac{g}{y+b}\right)]$ can be calculated as below.
\begin{align}
\begin{split}
    \mathbb{E}[ln\left(1 + \frac{g}{y+b}\right)]=\int\limits_{0}^{\infty}\zeta(z){\cal M} _{y}(Mz)e^{-Mzb}{\rm d}z,
\end{split}
\end{align} 
where ${\cal M}_{y}(z)={\rm E}[e^{-zy}]$ is the moment generating function of $y$, $\zeta(z)=\frac{1}{z}-\frac{1}{z(1+z)^M}$ which is derived in \cite[eq.(18)]{hamdi2007useful}.
\end{lemm}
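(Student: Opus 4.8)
The plan is to reduce the statement, via a Frullani-type integral representation of the logarithm, to a form in which the awkward ratio $\frac{g}{y+b}$ no longer appears, and then to exploit the independence of $g$ and $y$ together with the closed-form Laplace transform of the normalized Gamma variable. First I would invoke the elementary identity $\ln(1+u) = \int_0^\infty \frac{1}{w}\left(e^{-w} - e^{-w(1+u)}\right){\rm d}w$, valid for every $u \geq 0$ as a direct consequence of the Frullani integral. Applied with $u = \frac{g}{y+b}$, this gives, for each realization of $g$ and $y$,
\[
\ln\left(1 + \frac{g}{y+b}\right) = \int_0^\infty \frac{1}{w}\left(e^{-w} - e^{-w\left(1 + \frac{g}{y+b}\right)}\right){\rm d}w.
\]

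The crucial step is then the substitution $w = Mz(y+b)$ inside this integral. Its virtue is that the troublesome factor $e^{-w g/(y+b)}$ collapses to $e^{-Mzg}$, eliminating the ratio entirely and separating the $g$-dependence from the $y$-dependence:
\[
\ln\left(1 + \frac{g}{y+b}\right) = \int_0^\infty \frac{e^{-Mz(y+b)}}{z}\left(1 - e^{-Mzg}\right){\rm d}z.
\]
I would then take $\mathbb{E}_{g,y}$ of both sides and interchange expectation and integration (justified by Fubini--Tonelli, as the integrand is nonnegative). Since $e^{-Mz(y+b)}$ depends only on $y$ and $\left(1-e^{-Mzg}\right)$ only on $g$, the assumed independence lets the two expectations factor, yielding
\[
\mathbb{E}\left[\ln\left(1+\tfrac{g}{y+b}\right)\right] = \int_0^\infty \frac{1}{z}\,e^{-Mzb}\,\mathbb{E}_y\!\left[e^{-Mzy}\right]\left(1 - \mathbb{E}_g\!\left[e^{-Mzg}\right]\right){\rm d}z.
\]

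Finally I would substitute the Laplace transform of the unit-mean Gamma variable, $\mathbb{E}_g[e^{-Mzg}] = (1 + Mz/M)^{-M} = (1+z)^{-M}$, and recognize $\mathbb{E}_y[e^{-Mzy}] = {\cal M}_y(Mz)$. The bracketed factor then reads $\frac{1}{z}\left(1 - (1+z)^{-M}\right) = \zeta(z)$, and the claimed expression follows. I expect the main obstacle to be one of justification rather than discovery: verifying the interchange of expectation and integral and the integrability of $\zeta(z)\,{\cal M}_y(Mz)\,e^{-Mzb}$ near $z=0$, where $\zeta(z)\to M$ is bounded, and as $z\to\infty$, where $\zeta(z)\sim 1/z$ must be tamed by the decay of $e^{-Mzb}$ for $b>0$ (or of ${\cal M}_y(Mz)$). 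The substitution $w=Mz(y+b)$ is the genuinely clever ingredient; once it is in place, everything that remains is routine.
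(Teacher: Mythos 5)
Your proof is correct. The paper itself does not prove this lemma --- it is imported verbatim from the cited reference \cite{hamdi2007useful} --- and your argument is essentially the standard derivation given there: the Frullani representation of the logarithm, the substitution $w = Mz(y+b)$ that decouples $g$ from $y$, Tonelli to pass the expectation inside (legitimate since the integrand is nonnegative), and the closed-form Laplace transform $\mathbb{E}_g[e^{-Mzg}]=(1+z)^{-M}$ of the unit-mean Gamma variable, which produces exactly $\zeta(z)$. Your closing remarks on integrability (boundedness of $\zeta$ near $z=0$, decay via $e^{-Mzb}$ for $b>0$) are the right caveats and are satisfied in the paper's application, where $b=\hat{N}>0$ and $y\geq 0$.
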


Therefore, $C_{T}$ is evaluated as follows.
\begin{thm}
\label{lemm:THz_SE}
The spectral efficiency conditioned on the typical user associates with THz tier is
\begin{equation}
    \mathcal C_{T} = \int\limits_{0}^{R_T} \int\limits_{0}^{\infty}\zeta(z) \mathcal{L}_{\hat{J}}(\eta(\hat{x})) e^{-\eta(\hat{x}) \hat{N}} f_{X_T}(\hat{x}){\rm d}z{\rm d}\hat{x},
    \label{eq:34}
\end{equation}
where $\eta(\hat{x}) = Mz e^{ka(f_T)\hat{x}} \hat{x}^{\alpha_T}$, $\mathcal{L}_{\hat{J}}(s)$ and $\hat{N}$ are defined in Appendix \ref{appendix:proof_lemma_coverage_probability} and $f_{X_T}(\hat{x})$ is defined in Lemma \ref{lemm:serving_BS_distance_distribution}.
\end{thm}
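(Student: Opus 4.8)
The plan is to express the conditioned spectral efficiency $\mathcal{C}_T = \mathbb{E}[\ln(1 + \mathrm{SINR}_T)]$ in a form that matches the integral identity of Lemma~\ref{lemm:useful technique}, and then average over the serving distance. First I would reuse the normalization already introduced in (\ref{eqn:THz_coverage_1}): factoring $P_T N_T (c/4\pi f_T)^2$ out of both the numerator and the denominator of (\ref{eqn:SINR_THz}) turns the THz SINR into $\mathrm{SINR}_T = \hat{S}/(\hat{J} + \hat{N})$, where $\hat{S} = e^{-k_a(f_T) x_T} x_T^{-\alpha_T} g_0$, $\hat{J}$ is the normalized aggregate interference, and $\hat{N}$ is the normalized Johnson--Nyquist noise, all as defined in Appendix~\ref{appendix:proof_lemma_coverage_probability}. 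Writing $c(x_T) \triangleq e^{-k_a(f_T) x_T} x_T^{-\alpha_T}$, this becomes $\mathrm{SINR}_T = g_0 / \big( (\hat{J} + \hat{N})/c(x_T) \big)$.

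Next, conditioning on the serving distance $x_T = \hat{x}$, I would cast the ratio into the exact shape required by Lemma~\ref{lemm:useful technique}, namely $\ln(1 + g_0/(y + b))$, by setting $y = \hat{J}/c(\hat{x})$ (a random variable, through the interferer PPP together with their fading and antenna gains) and $b = \hat{N}/c(\hat{x})$ (a deterministic constant once $\hat{x}$ is fixed). The key structural fact to verify here is that the desired-link fading $g_0 \sim \mathrm{Gamma}(M, 1/M)$ is independent of $y$: $g_0$ is drawn independently of the interferers' fading coefficients $g_i$, their gains $\hat{G}_i$, and their locations, so the hypotheses of the lemma are satisfied and $g_0$ plays the role of the unit-mean Nakagami-$m$ variable.

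Applying Lemma~\ref{lemm:useful technique} then produces the inner integral $\int_0^\infty \zeta(z)\, \mathcal{M}_y(Mz)\, e^{-Mz b}\,\mathrm{d}z$. I would identify the moment generating function as $\mathcal{M}_y(Mz) = \mathbb{E}[e^{-Mz\hat{J}/c(\hat{x})}] = \mathcal{L}_{\hat{J}}\big(Mz/c(\hat{x})\big)$, and observe that $Mz/c(\hat{x}) = Mz\, e^{k_a(f_T)\hat{x}} \hat{x}^{\alpha_T} = \eta(\hat{x})$; simultaneously $e^{-Mz b} = e^{-Mz\hat{N}/c(\hat{x})} = e^{-\eta(\hat{x})\hat{N}}$. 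This collapses the inner integral to $\int_0^\infty \zeta(z)\, \mathcal{L}_{\hat{J}}(\eta(\hat{x}))\, e^{-\eta(\hat{x})\hat{N}}\,\mathrm{d}z$, with $\mathcal{L}_{\hat{J}}$ the Laplace transform already computed in Appendix~\ref{appendix:proof_lemma_coverage_probability}, which itself carries the residual dependence on $\hat{x}$ through its lower integration limit.

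Finally, I would remove the conditioning on distance by integrating against the density of the serving distance given THz association, $f_{X_T}(\hat{x})$ from Lemma~\ref{lemm:serving_BS_distance_distribution}, over $\hat{x}\in[0,R_T]$ (the LOS-ball support for THz links), which yields the stated double integral for $\mathcal{C}_T$. I expect the only genuine subtlety to be the independence and normalization bookkeeping of the second step---confirming that, after conditioning on $\hat{x}$, exactly one term of the denominator is random and of the MGF type while the noise term is a clean constant---since once the expression is in the $g_0/(y+b)$ form, the remaining manipulations are the direct substitutions carried out above.
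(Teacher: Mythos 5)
Your proposal is correct and follows essentially the same route as the paper: the paper's proof likewise sets $y = e^{k_a(f_T)\hat{x}}\hat{x}^{\alpha_T}\hat{J}$ and $b = e^{k_a(f_T)\hat{x}}\hat{x}^{\alpha_T}\hat{N}$, applies Lemma~\ref{lemm:useful technique}, identifies $\mathcal{M}_y(Mz)$ with $\mathcal{L}_{\hat{J}}(\eta(\hat{x}))$, and averages over $f_{X_T}(\hat{x})$. Your version merely spells out the normalization and independence bookkeeping that the paper leaves implicit.
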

\begin{proof}
Following Lemma \ref{lemm:useful technique} and (\ref{eqn:SINR_THz}), we can express $y = e^{ka(f_T)\hat{x}} \hat{x}^{\alpha_T} \hat{J}$ and $b = e^{ka(f_T)\hat{x}} \hat{x}^{\alpha_T} \hat{N}$, which obtains (\ref{eq:34}) readily. Although there is no closed-form solution for this expression, accurate spectral efficiency can be computed efficiently using mathematical software. 
\end{proof}

\subsection{Spectral Efficiency of mmWave Networks}
In this subsection, we will derive the spectral efficiency conditioned on the typical user is associated with the mmWave tier, which can be directly calculated as below:
\begin{align}
\begin{split}
    {C}_{m} &= \mathbb {E}_{X_m}\left [ \ln \left ({ 1 + {\tt SINR}_{m}(X_m) }\right)\right]\\
    &= \mathbb{E}_{X_m}\left [ \int\limits_{0}^{\infty}\mathbb {P}\left [ \ln \left ({ 1 + {\tt SINR}_{m}(X_m) }\right) > t \right]{\rm d}t \right]\\
    &= \mathbb{E}_{X_m} \left[ \int\limits_{0}^{\infty} \mathbb {P} [ h > (e^t-1)(\sigma^2 + \hat{I}_m) X_m^{\alpha_m} ] {\rm d}t  \right]\\
    &= \mathbb{E}_{X_m} \left[ \int\limits_{0}^{\infty} e^{(1-e^t)\sigma^2 X_m^{\alpha_m} } \mathcal{L}_{\hat{I}_m}((e^t-1)X_m^{\alpha_m} ) {\rm d}t \right]\\
    &= \int\limits_{0}^{R_m} \int\limits_{0}^{\infty} \frac{e^{-\tau\sigma^2 \hat{x}^{\alpha_m} }}{1+\tau} \mathcal{L}_{\hat{I}_m}(\tau \hat{x}^{\alpha_m}) f_{X_m}(\hat{x}) {\rm d}\tau{\rm d}\hat{x},
\end{split}
\label{eq:35}
\end{align}
where $\sigma$, $\hat{I}_m$ and $\mathcal{L}_{\hat{I}_m}(s)$ are all defined in Theorem \ref{lemm:mmwave_coverage_probability}, $f_{X_m}(\hat{x})$ is defined in Lemma \ref{lemm:serving_BS_distance_distribution} and the last step follows $\tau = e^t-1$.

\subsection{Spectral Efficiency for Hybrid IoT Networks}
The spectral efficiency of the hybrid IoT network can be evaluated by (\ref{eqn:hybrid_network_SE}) as follows
\begin{equation}
    \begin{split}
    &C = \int\limits_{0}^{R_T} \int\limits_{0}^{\infty} \mathcal{H}_T(z,\hat{x}) {\rm d}z{\rm d}\hat{x} + \int\limits_{0}^{R_m} \int\limits_{0}^{\infty} \mathcal{H}_m(\tau,\hat{x}) {\rm d}\tau{\rm d}\hat{x},\\
    &\mathcal{H}_T(z,\hat{x}) = \mathcal {A}_{T} \zeta(z) \mathcal{L}_{\hat{J}}(\eta(\hat{x})) e^{-\eta(\hat{x}) \hat{N}} f_{X_T}(\hat{x}),\\
    &\mathcal{H}_m(\tau,\hat{x}) = \mathcal {A}_{m} \frac{e^{-\tau\sigma^2 \hat{x}^{\alpha_m} }}{1+\tau} \mathcal{L}_{\hat{I}_m}(\tau \hat{x}^{\alpha_m}) f_{X_m}(\hat{x}),        
    \end{split}
\end{equation}
where the proof readily follows by using the law of total probability with (\ref{eq:34}) and (\ref{eq:35}).

\section{Numerical Results}
In this section, we will demonstrate the accuracy of the mathematical framework for the proposed hybrid IoT networks. The analytical results of the coverage probability and spectral efficiency will be illustrated and validated by simulation results which were averaged over $10^4$ realizations. 

\subsection{Coverage Analysis for THz Networks}
First, we will validate our analytical results of $\tt SINR$ coverage for TBS-only networks. In Fig. \ref{THz_pcov_Nt}, we illustrate the coverage probability of THz networks with antenna array size $N_t$ of 8, 16, 32, 64. The coverage probabilities were calculated by Theorem \ref{lemm:coverage_probability}. As it depicts, the analytical results closely match the simulation results. The results validate the tight bound of Lemma \ref{lemm:Alzer inequality} can provide an accurate approximation. The impact of antenna array size on THz network performance is also revealed in the results. We can observe that a larger antenna array size can significantly improve the coverage probability. This is due to the larger size of highly directional antennas leading to higher antenna gain, which remarkably compensates for the severe propagation loss resulting from the inherent high directivity and poor penetrability. Additionally, the narrower HPBW mitigates the interfering signals, which boosts $\tt SINR$. Fig. \ref{THz_pcov_ple} illustrates the coverage probability for varying pathloss exponent of 2, 3, 4 when antenna array size $N_t = 32$. Comparing to antenna array size, the pathloss exponent has a more significant impact on coverage probability. It is observed that a larger pathloss exponent will lead a better network performance. This is because a larger pathloss exponent will degrade the power of interfering links which, consequently, improves the coverage probability.

\subsection{Impact of Molecular Absorption}
Molecular absorption is a unique characteristic of THz communication. In this work, we adopted a simplified absorption model to calculate the absorption coefficient within the frequency band of 275 - 400 GHz. As illustrated in Fig. \ref{THz_pcov_kaf}, the absorption coefficient corresponding to varying operating frequency will lead to the fluctuations of coverage probability. In specific, the coverage probability is likely to be inversely proportional to the absorption coefficient. Intuitively, a higher absorption coefficient will cause severer absorption loss and, thus, lower coverage probability. Consequently, the absorption loss causes around 5\% degradation in terms of coverage probability within 275 - 400 GHz band.

\begin{figure*}[!htp]
  \centering
  \subfigure[]
  {\includegraphics[width=0.485\linewidth]{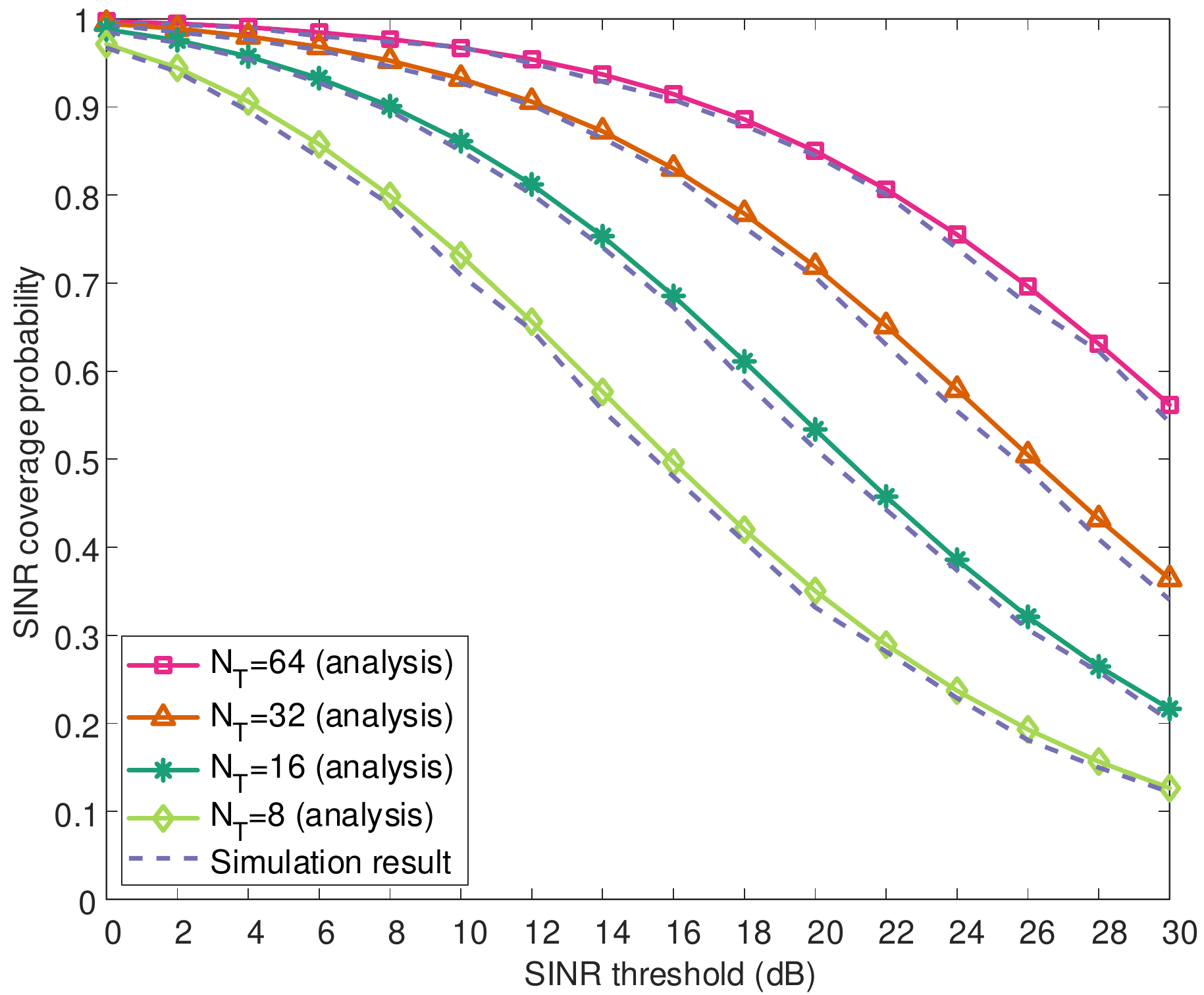}
  \label{THz_pcov_Nt}
  }
  \enspace
  \subfigure[]
  {\includegraphics[width=0.485\linewidth]{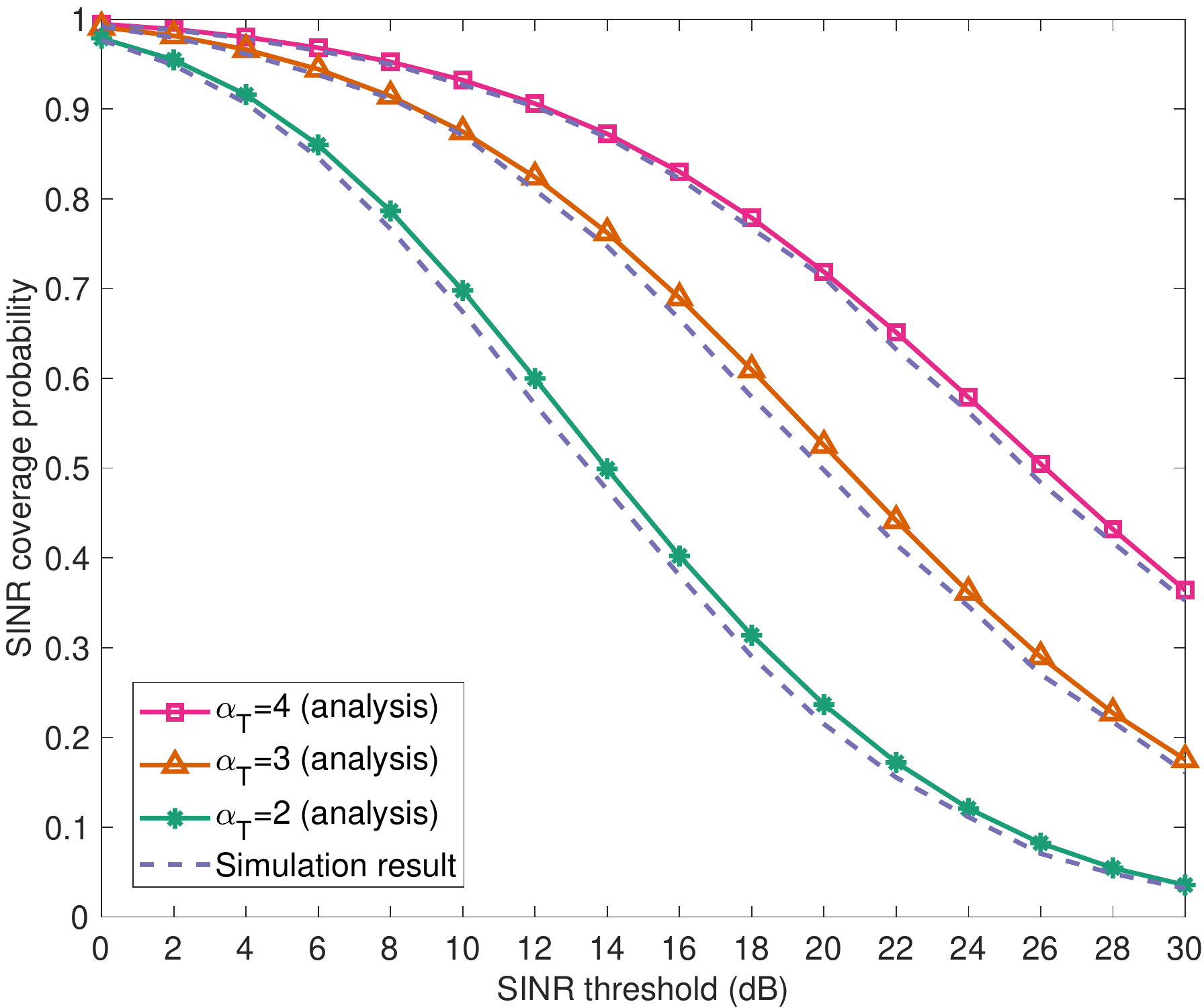}
  \label{THz_pcov_ple}
  }
\caption{$\tt SINR$ coverage probability of THz networks across various (a) antenna array size and (b) pathloss exponent.}
\label{pcov_Nt_ple}
\end{figure*}

\begin{figure*}[!htp]
  \centering
  \subfigure[]
  {\includegraphics[width=0.485\linewidth]{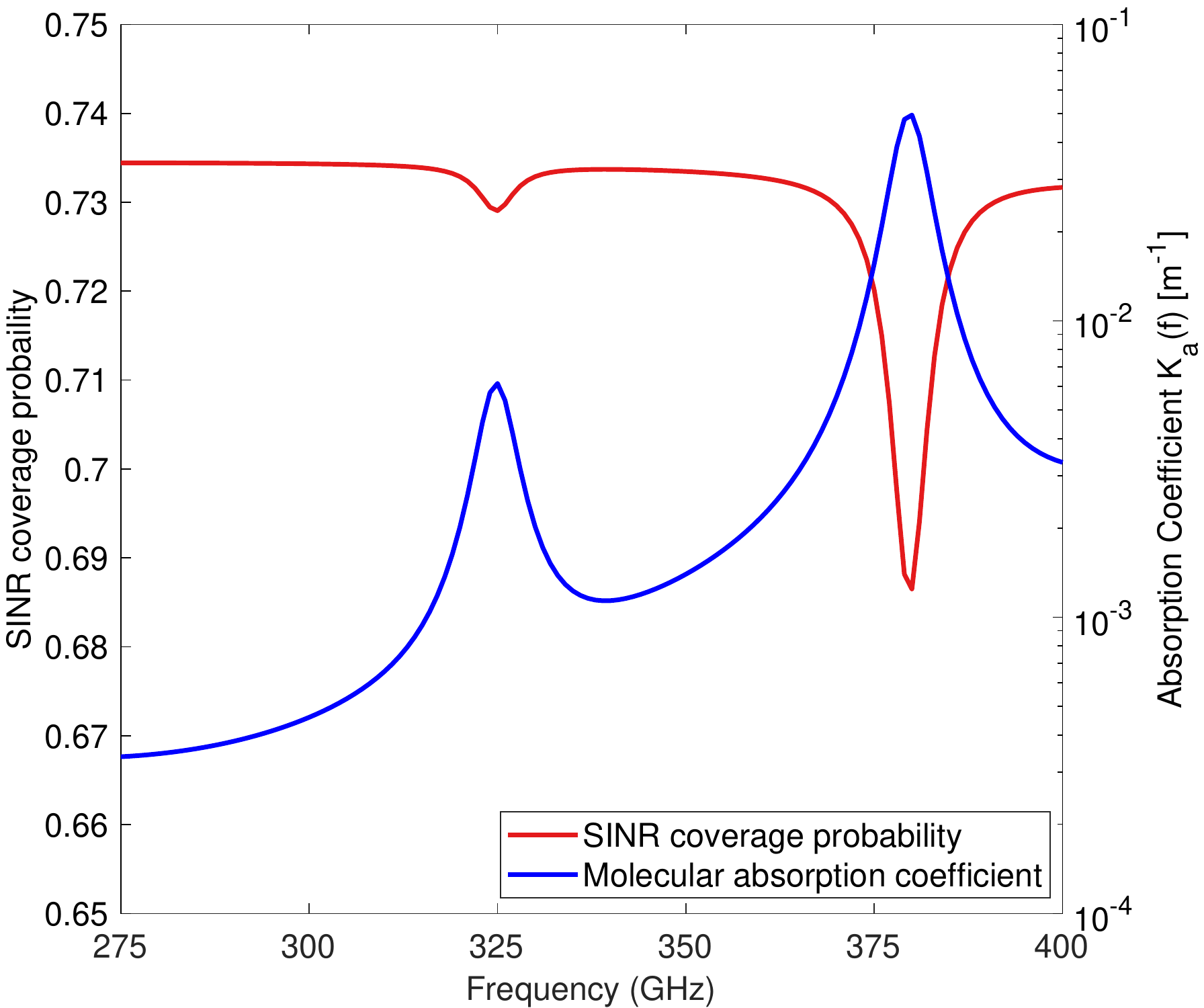}
  \label{THz_pcov_kaf}
  }
  \enspace
  \subfigure[]
  {\includegraphics[width=0.485\linewidth]{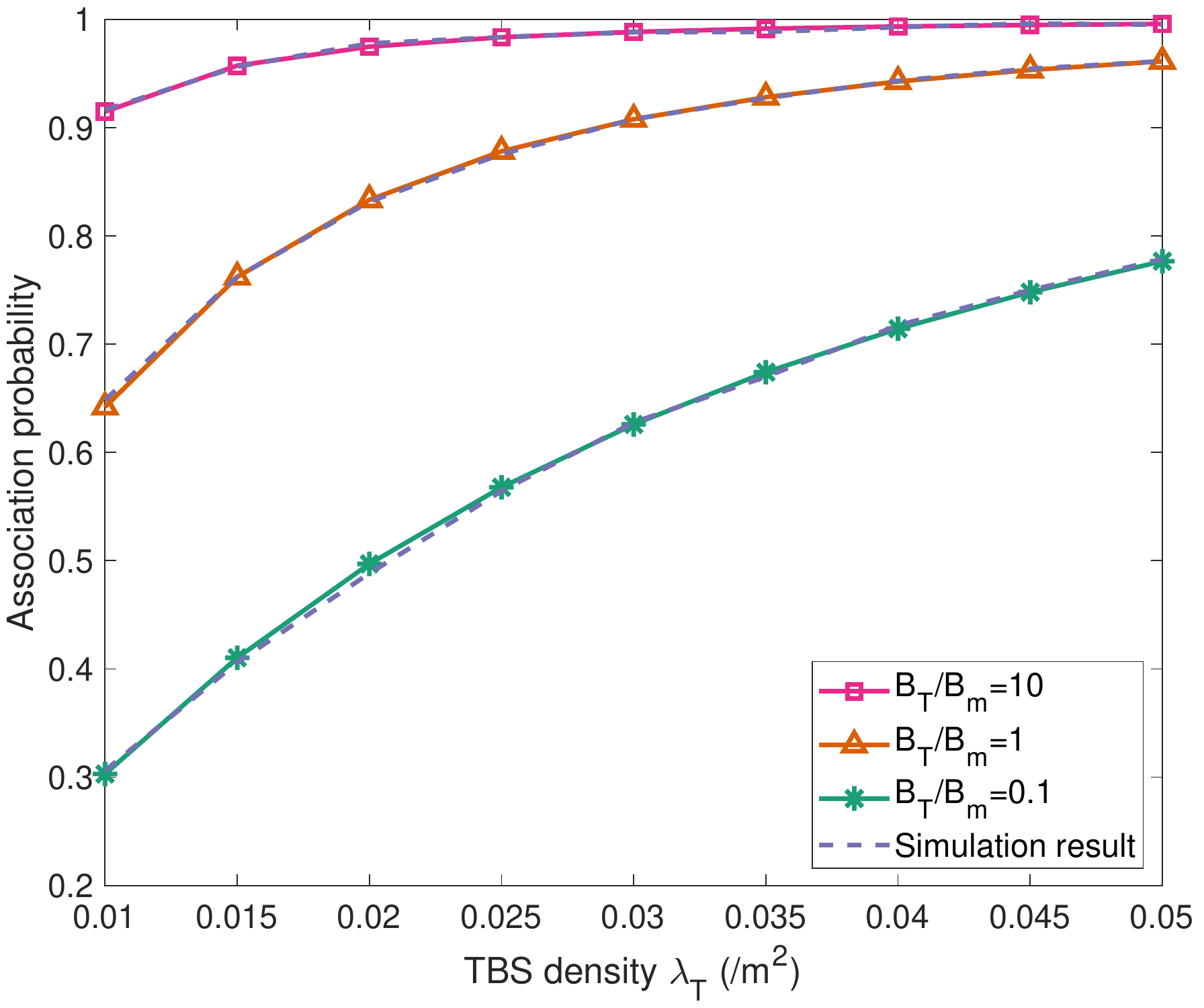}
  \label{THz_at_lambda}
  }
  
\caption{(a) Coverage probability of THz networks for varying operating frequency with $\tt SINR$ threshold of 20 dB, (b) Associate probability to TBS for various $\lambda_T$ and bias value.}
\label{THz_kf_at}
\end{figure*}

\begin{figure*}[!htp]
  \centering
  \subfigure[]
  {\includegraphics[width=0.485\linewidth]{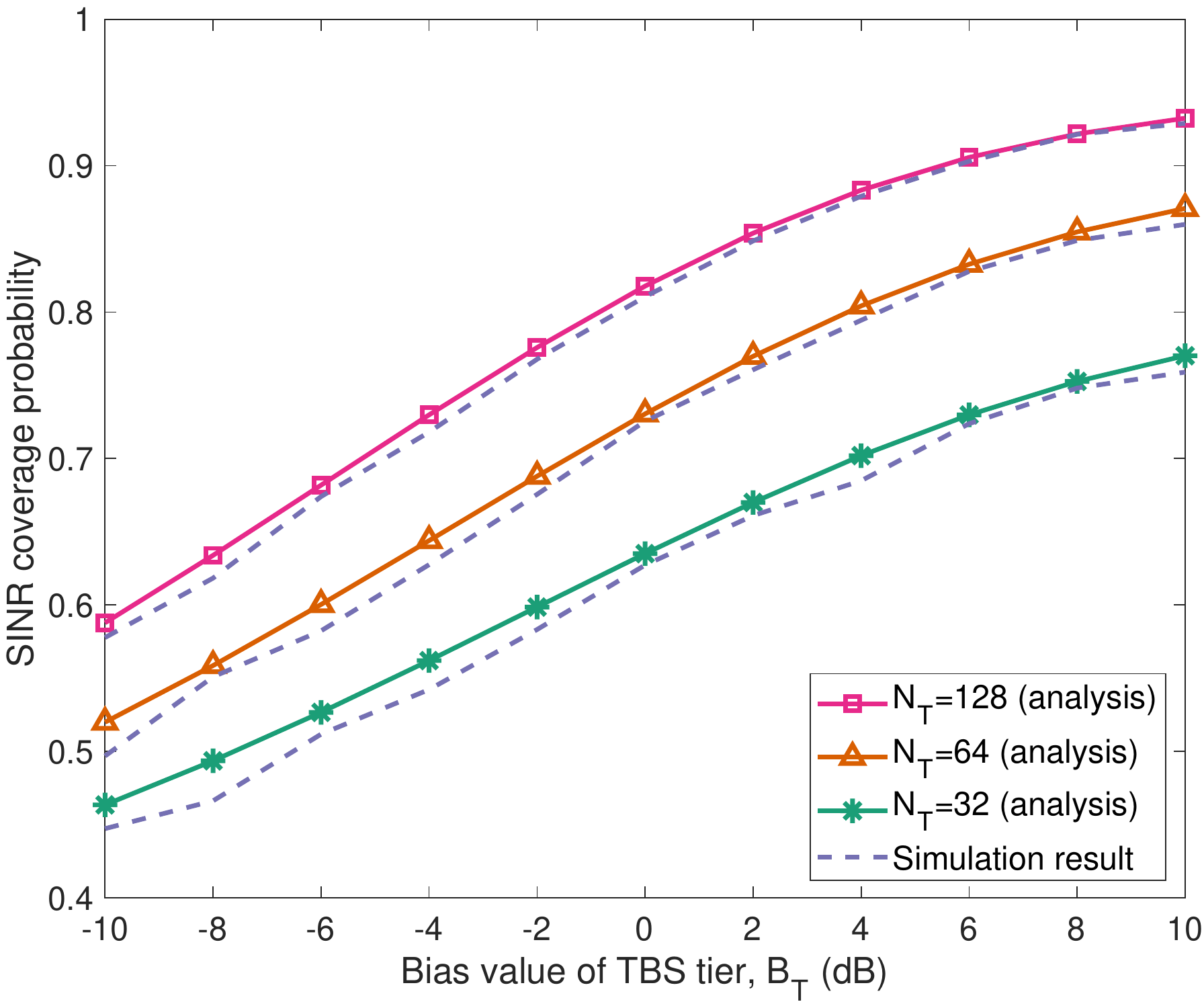}
  \label{p_cov_nt}
  }
  \enspace
  \subfigure[]
  {\includegraphics[width=0.485\linewidth]{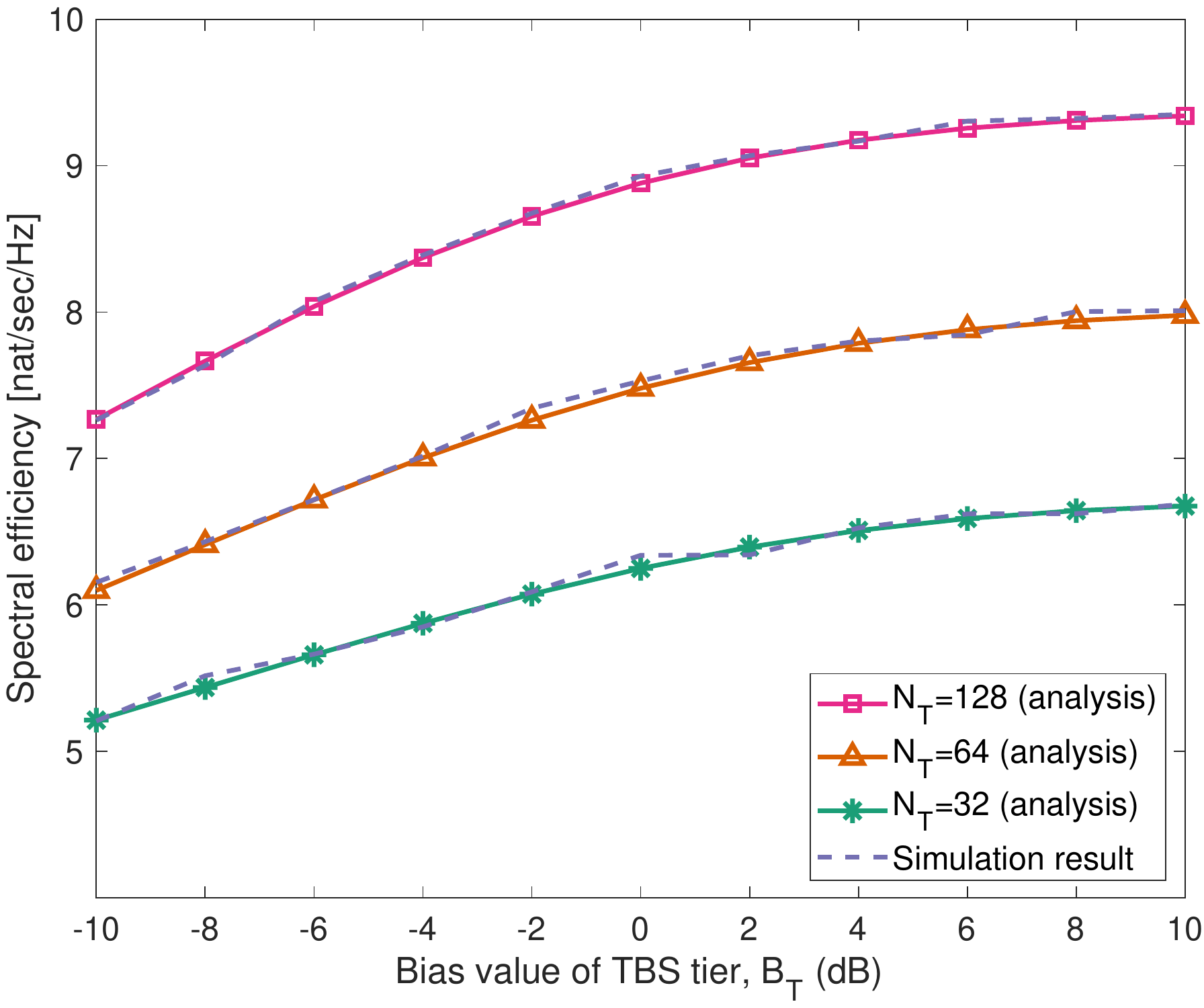}
  \label{se_vs_nt}
  }
  
\caption{(a) Coverage probability of the hybrid IoT network with TBS threshold 30 dB, $\lambda_T = 0.01/m^2$, (b) Spectral efficiency of the hybrid IoT network for various antenna array sizes.}
\label{p_cov_se}
\end{figure*}

\begin{figure*}[htp]
  \centering
  \subfigure[]
  {\includegraphics[width=0.485\linewidth]{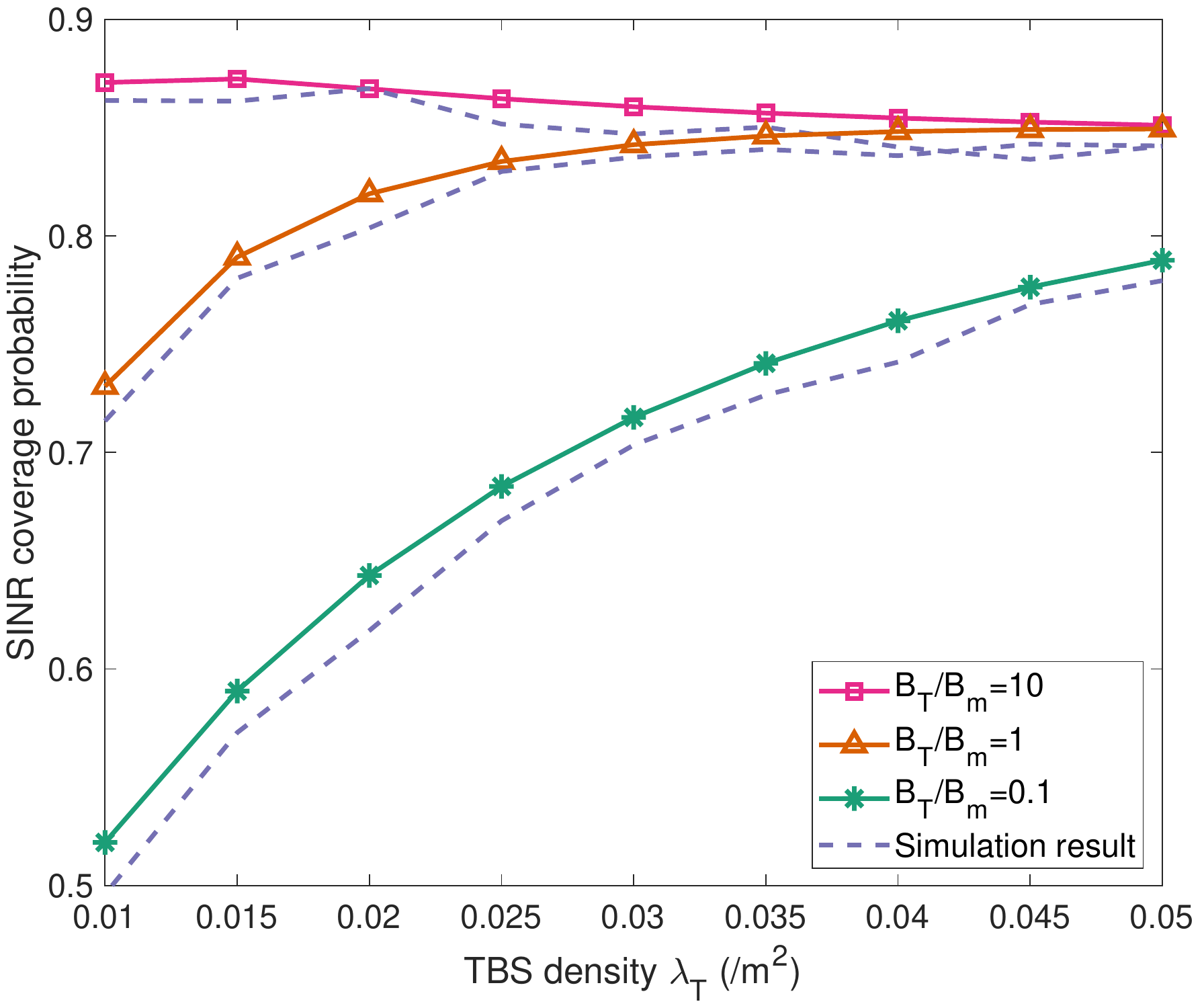}
  \label{p_cov_lam}
  }
  \enspace
  \subfigure[]
  {\includegraphics[width=0.485\linewidth]{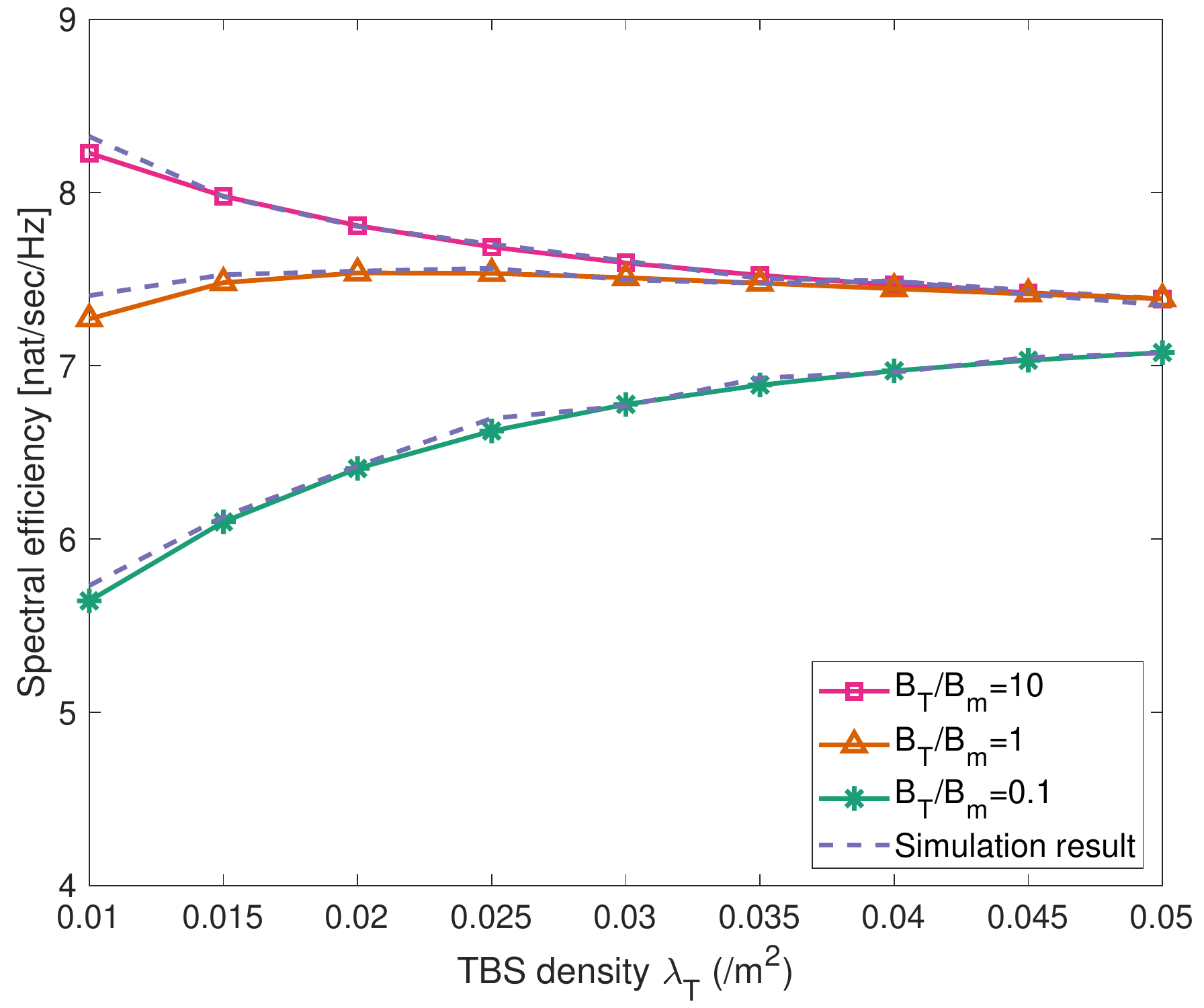}
  \label{se_vs_lam}
  }
  
\caption{(a) Coverage probability of the hybrid IoT network with threshold of 20 dB, (b) Spectral efficiency of the hybrid IoT network for varying TBS densities.}
\label{p_cov_se_lam}
\end{figure*}

\subsection{Hybrid IoT networks Performance}
The previous subsection validated the accuracy of the mathematical framework for THz networks. In this subsection, we will further verify the analytical results of the hybrid IoT network. Moreover, the impact of bias value, network density, and antenna array size on the overall performance of the hybrid IoT network will be investigated in this subsection. 

\subsubsection{Association Probability to THz Tier}
In Fig. \ref{THz_at_lambda}, we demonstrate the association probability with THz tier for a varying density of THz networks and a bias value. It is observed that the analytical results calculated by Theorem \ref{thm:associate_THz} achieve a decent accuracy comparing to the simulation results. According to the results, a larger TBS density will monotonically increase the association probability to the THz tier. Intuitively, larger TBS density will considerably extend the coverage leading to a higher association probability with the THz tier. Similarly, a higher bias value will also greatly encourage users to associate with the THz tier.

\subsubsection{Coverage Probability and Spectral Efficiency for Antenna Array Size and bias}
In Fig. \ref{p_cov_nt}, we illustrate the $\tt SINR$ coverage probability and spectral efficiency of the hybrid IoT network for varying bias value $B_T$ with $N_T=128,64,32$, respectively. It can be observed that there is a slight gap between the analytical and simulated coverage probability while the analytical and simulated spectral efficiency shows an accurate agreement. This is due to the Alzer's inequality used for coverage probability evaluation, which provides a tight bound. In contrast, the analytical spectral efficiency is evaluated by the exact expression with the method of Lemma \ref{lemm:useful technique}. Overall speaking, a larger TBS antenna array size and higher bias value to the THz tier significantly benefit both coverage probability and spectral efficiency. Especially, a larger antenna array of TBS will make up for its inherent poor penetrability and thus capitalize upon its abundant spectrum resources. Furthermore, a similar impact of the antenna array on spectral efficiency is also captured. Therefore, TBSs are better equipped with a larger antenna array in the hybrid IoT network if given a choice. Higher bias value to THz tier, which attracts more UE to associate with TBSs, also improves the network performance. In contrast, the effect of bias value in $\tt SINR$ coverage probability and spectral efficiency is different. Specifically, desired $\tt SINR$ coverage probability can be achieved flexibly by adopting a larger antenna array or higher bias value, whereas the benefit from higher bias to spectral efficiency is relatively limited. Higher target spectral efficiency can only be achieved by adopting a larger antenna array size. Therefore, these parameters need to be tuned accordingly depending on the network requirement and application scenarios.

\subsubsection{Coverage Probability and Spectral Efficiency for BS density and bias}
Fig. \ref{p_cov_se_lam} illustrates how the TBS density contributes to the coverage probability and spectral efficiency of the hybrid IoT network for varying TBS bias values. As depicted in the figure, a higher bias value can improve both the coverage probability and spectral efficiency given fixed TBS density. However, higher TBS density may degrade the network performance. In specific, the beneficial effect of a high TBS bias value is significantly evident in a low TBS density regime, whereas the benefit is gradually limited with increasing TBS density when UE is more encouraged to be associated with TBS, i.e., $B_T/B_m \geq 1$. High TBS density with a strong TBS bias may even lead to lower network performance. In contrast, when the TBS bias value is relatively low, e.g., $B_T/B_m < 1$, high TBS density can continuously improve the network performance. Therefore, bias and density can be tuned accordingly when designing the deployment strategy to optimize the overall performance. For example, when TBS is deployed on a small scale with a relatively low density, a higher TBS bias value is regarded as the priority factor in order to significantly boost the network performance. When TBSs are densely deployed, bias can only have a minimal impact on the network performance. This is because when TBSs are deployed with very high density, THz links will dominate the network regardless of the bias value.

\section{Conclusion}
In this paper, we investigated the THz networks and the hybrid IoT network consisting of TBSs and MBSs. We considered an accurate MLFT antenna pattern to analyze the THz network based on the stochastic geometric framework. The closed-form expression of the Laplace transform of the interference in THz networks was derived. Furthermore, the coverage probability and spectral efficiency of THz and the hybrid IoT network were evaluated. The numerical results revealed that both a larger antenna array and density of TBS improve the hybrid IoT network's performance. Furthermore, we recognized a fundamental trade-off relation between the TBS's node density and the bias to mmWave/THz. This work can be extended by using a non-homogeneous point process like the Poisson cluster process. Also, the blockage effect can be modeled by a more sophisticated and realistic blockage model proposed in \cite{bai2014analysis}.

\appendices
\section{Proof of Lemma \ref{lemm:serving_BS_distance_distribution}}
\label{appendix:proof_distance_distribution}
Considering the typical user is associated with THz tier, e.g., $k=$ T, the distribution of the distance between the associated TBS to the typical UE can be derived as below.
\begin{align}
\begin{split}
    &\mathbb{P}[X_T > \hat{x}]
    =\mathbb{P}[x_T > \hat{x} | k = T]\\
    &=\frac{\mathbb{P}[x_T > \hat{x}, k = T]}{\mathbb{P}[k = T]}\\
    &=\frac{\mathbb{P}[x_T > \hat{x}, k = T]}{\mathcal {A}_{\mathrm T}},
\end{split}    
\end{align}
where $x_T$ denotes the distance of the closest TBS to the typical UE and the joint probability in the numerator is
\begin{align}
\begin{split}
    &\mathbb{P}[x_T > \hat{x}, k = T]\\
    &=\mathbb{P}[x_T > \hat{x}, P_{r,T}>P_{r,M} ]\\
    &=\int_{\hat{x}}^{R_T}\mathbb{P}[P_{r,T}>P_{r,M}]f_{x_T}(x){\rm d}x\\
    &= \int_{\hat{x}}^{R_T} f_{x_T}(x) e^{-\pi \lambda_m \left(\varepsilon r^{\alpha_T} e^{k_a(f_T)x}\right)^ {\frac{2}{\alpha_m}}}{\rm d}x,
\end{split}    
\end{align}
where $\varepsilon$ and $f_{x_T}(x)$ are defined in Theorem \ref{thm:associate_THz}. Therefore, the PDF of $X_T$ is expressed as
\begin{align}
\begin{split}
    &f_{X_T}(\hat{x})=\frac{{\rm d}\left[1-\mathbb{P}[X_T > \hat{x}]\right]}{{\rm d} \hat{x}}\\
    &=\frac{1}{\mathcal {A}_{\mathrm T}} \left[f_{x_T}(\hat{x}) e^{-\pi \lambda_m \left(\varepsilon \hat{x}^{\alpha_T} e^{k_a(f_T)\hat{x}}\right)^ {\frac{2}{\alpha_m}}}\right].
    \label{eqn:X_T_distribution}
\end{split}    
\end{align}
Similarly, we can first calculate the joint probability as below.
\begin{align}
\begin{split}
    &\mathbb{P}[x_m > \hat{x}, k = m]\\
    &=\mathbb{P}[x_m > \hat{x}, P_{r,M}>P_{r,T} ]\\
    &=\int_{\hat{x}}^{R_m}\mathbb{P}[\varepsilon x_T^{\alpha_T} e^{k_a(f_T)x_T} > x^{\alpha_m}]f_{x_m}(x){\rm d}x\\
    &=\int_{\hat{x}}^{R_m}\mathbb{P}[r>\nu(x)]f_{x_m}(x){\rm d}x,\\
    &=\int_{\hat{x}}^{R_m}e^{-\pi\lambda_T \nu^2(x)}f_{x_m}(x){\rm d}x
\end{split}    
\end{align}
where $f_{x_m}(x) = \frac{2\pi\lambda_m}{1-e^{-\lambda_m \pi R_m^2}}  x e^{-\pi\lambda_m x^2}$, and  $\nu(x)=\frac{\alpha_T}{k_a(f_T)}W\left(\frac{k_a(f_T)}{\alpha_T}[\frac{x^{\alpha_m}}{\varepsilon}]^{\frac{1}{\alpha_T}}\right)$, where $W(x)$ is Lambert $W$ function. Then we get the PDF of $X_m$ as below.

\begin{align}
\begin{split}
    &f_{X_m}(\hat{x})=\frac{{\rm d}\left[1-\mathbb{P}[X_m > \hat{x}]\right]}{{\rm d} \hat{x}}\\
    &=\frac{1}{\mathcal {A}_{\mathrm m}} [f_{x_m}(\hat{x})e^{-\pi\lambda_T \nu^2(\hat{x})}].
    \label{eqn:X_m_distribution}
\end{split}    
\end{align}

\section{Proof of Theorem \ref{lemm:coverage_probability}}
\label{appendix:proof_lemma_coverage_probability}
Provided the derivation in (\ref{eqn:THz_coverage_1}), the coverage probability of TBS-only networks can be approximated as
\begin{align}
  \begin{split}
    \mathbb {C}_T\left(\tau\right) {\simeq} \sum_{n=1}^{M} \binom{M}{n}\left( -1\right)^{n+1} \mathbb{E}_{\hat{J}, x_T}\left[e^{-P_n(x_T) \left(\hat{J} + \hat{N}\right)}
    \right],
  \end{split}
\end{align}
And the term $\mathbb{E}_{\hat{J}, x_T}\left[e^{-P_n(x_T) \left(\hat{J} + \hat{N}\right)}\right]$ can be calculated as:
\begin{equation}
  \begin{split}
    &\mathbb{E}_{\hat{J}, x_T}\left[e^{-P_n(x_T)\left(\hat{J} + \hat{N}\right)} \right]\\
    &=\int_{0}^{R_T} f_{x_T}(x) \mathbb{E}_{\hat{J}}\left[e^{-P_n(x)\left(\hat{J} + \hat{N}\right)} \right] {\rm d}x\\
    &=
    \int_{0}^{R_T} f_{x_T}(x) e^{-P_n(x) \hat{N}}\mathcal{L}_{\hat{J}}\left( P_n(x)\right) {\rm d}x,
  \end{split}
  \label{THz_coverage_3}
\end{equation}
where $\mathcal{L}_{A}\left( s\right) = \mathbb{E}\left[e^{-s A} \right]$ is the Laplace transform of a parameter $A$, and given $x_T = x$, it can be evaluated as follows.
\begin{align}
    \mathcal{L}_{\hat{J}}(s) &= \mathbb{E} \left[e^{-s \sum_{i\in \Phi_{T \backslash o}}\hat{G_i}x^{-\alpha_T}_{T,i} g_i}\right] \notag\\
    &= \mathbb{E}\left[ \prod \limits_{i\in \Phi_{T \backslash o}} e^{-s\hat{G_i}x^{-\alpha_T}_{T,i} g_i}\right] \notag\\
    &\overset{(a)}{=} \mathbb{E}\left[ \prod \limits_{i\in \Phi_{T \backslash o}} \frac{1}{{(1 + \frac{s\hat{G_i}x^{-\alpha_T}_{T,i}}{M})}^M}\right]  \notag\\
    &\overset{(b)}{=}\mathbb{E}_{\Phi_T}\! \prod \limits_{i\in \Phi_{T \backslash o}}\! \left[ (1-2K_T \psi_T) {+} \sum_{k=1}^{K_T}\frac{2\psi_T}{{(1 + \frac{s\hat{G_k}x^{-\alpha_T}_{T,i}}{M})}^M} \right]\notag\\
    &\overset{(c)}{=}exp\left(4\pi\lambda_T\psi_T \left[\chi_T(s)+\frac{K_T}{2}\left(x^2-R_T^2\right)\right] \right)
  \label{THz_laplace_2}
\end{align}
where (a) is calculated by the moment generating function of Gamma random variable, (b) is by computing the mean of antenna gain, (c) follows the probability generating function (PGFL) of the PPP, $K_T=\bigl \lfloor \frac {N_T}{2}\bigr \rfloor$, $\psi_T$ is the the  half-power beamwidth and $\chi_T(s)$ is evaluated as follows:
\begin{align}
  \begin{split}
  & \chi_T(s) = \int_x^{R_T}\sum_{k=1}^{K_T}\frac{1}{{(1 + \frac{s\hat{G_k}t^{-\alpha_T}}{M})}^M} t{\rm d}t\\
  & =\sum_{k=1}^{K_T} \left[ \frac{t^2}{2} {}~_2F_1(-\frac{2}{\alpha_T},M;\frac{\alpha_T-2}{\alpha_T};-\frac{s\hat{G_k}t^{-\alpha_T}}{M}) \right]_x^{R_m},
  \end{split}
  \label{THz_laplace_3}
\end{align}
where ${}_2F_1(\cdot )$ denotes the hypergeometric function, and $[F(x)]_a^b=F(b)-F(a)$. 

\ifCLASSOPTIONcaptionsoff
  \newpage
\fi

\bibliographystyle{ieeetr}
\bibliography{ref}

\begin{thebibliography}{10}

\bibitem{jornet2011channel}
J.~M. Jornet and I.~F. Akyildiz, ``Channel modeling and capacity analysis for
  electromagnetic wireless nanonetworks in the terahertz band,'' {\em IEEE
  Transactions on Wireless Communications}, vol.~10, no.~10, pp.~3211--3221,
  2011.

\bibitem{petrov2017interference}
V.~Petrov, M.~Komarov, D.~Moltchanov, J.~M. Jornet, and Y.~Koucheryavy,
  ``Interference and sinr in millimeter wave and terahertz communication
  systems with blocking and directional antennas,'' {\em IEEE Transactions on
  Wireless Communications}, vol.~16, no.~3, pp.~1791--1808, 2017.

\bibitem{kokkoniemi2017stochastic}
J.~Kokkoniemi, J.~Lehtom{\"a}ki, and M.~Juntti, ``Stochastic geometry analysis
  for mean interference power and outage probability in thz networks,'' {\em
  IEEE Transactions on Wireless Communications}, vol.~16, no.~5,
  pp.~3017--3028, 2017.

\bibitem{liu2020covert}
Z.~Liu, J.~Liu, Y.~Zeng, and J.~Ma, ``Covert wireless communication in iot
  network: From awgn channel to thz band,'' {\em IEEE Internet of Things
  Journal}, vol.~7, no.~4, pp.~3378--3388, 2020.

\bibitem{wu2019interference}
Y.~Wu and C.~Han, ``Interference and coverage analysis for indoor terahertz
  wireless local area networks,'' in {\em 2019 IEEE Globecom Workshops (GC
  Wkshps)}, pp.~1--6, IEEE, 2019.

\bibitem{hossain2019stochastic}
Z.~Hossain, C.~N. Mollica, J.~F. Federici, and J.~M. Jornet, ``Stochastic
  interference modeling and experimental validation for pulse-based terahertz
  communication,'' {\em IEEE Transactions on Wireless Communications}, vol.~18,
  no.~8, pp.~4103--4115, 2019.

\bibitem{6804405}
J.~M. {Jornet} and I.~F. {Akyildiz}, ``Femtosecond-long pulse-based modulation
  for terahertz band communication in nanonetworks,'' {\em IEEE Transactions on
  Communications}, vol.~62, no.~5, pp.~1742--1754, 2014.

\bibitem{elshaer2016downlink}
H.~Elshaer, M.~N. Kulkarni, F.~Boccardi, J.~G. Andrews, and M.~Dohler,
  ``Downlink and uplink cell association with traditional macrocells and
  millimeter wave small cells,'' {\em IEEE Transactions on Wireless
  Communications}, vol.~15, no.~9, pp.~6244--6258, 2016.

\bibitem{andrews2016modeling}
J.~G. Andrews, T.~Bai, M.~N. Kulkarni, A.~Alkhateeb, A.~K. Gupta, and R.~W.
  Heath, ``Modeling and analyzing millimeter wave cellular systems,'' {\em IEEE
  Transactions on Communications}, vol.~65, no.~1, pp.~403--430, 2016.

\bibitem{jo2012heterogeneous}
H.-S. Jo, Y.~J. Sang, P.~Xia, and J.~G. Andrews, ``Heterogeneous cellular
  networks with flexible cell association: A comprehensive downlink sinr
  analysis,'' {\em IEEE Transactions on Wireless Communications}, vol.~11,
  no.~10, pp.~3484--3495, 2012.

\bibitem{sayehvand2020interference}
J.~Sayehvand and H.~Tabassum, ``Interference and coverage analysis in
  coexisting rf and dense terahertz wireless networks,'' {\em IEEE Wireless
  Communications Letters}, vol.~9, no.~10, pp.~1738--1742, 2020.

\bibitem{yu2017coverage}
X.~Yu, J.~Zhang, M.~Haenggi, and K.~B. Letaief, ``Coverage analysis for
  millimeter wave networks: The impact of directional antenna arrays,'' {\em
  IEEE Journal on Selected Areas in Communications}, vol.~35, no.~7,
  pp.~1498--1512, 2017.

\bibitem{wei2018simple}
H.~Wei and N.~Deng, ``A simple yet effective approximation for directional
  antenna arrays,'' in {\em 2018 24th Asia-Pacific Conference on Communications
  (APCC)}, pp.~476--481, IEEE, 2018.

\bibitem{rothman2014hitran}
L.~Rothman {\em et~al.}, ``Hitran: High-resolution transmission molecular
  absorption database,'' {\em Harvard-Smithson Center for Astrophysics, www.
  cfa. harvard. edu}, 2014.

\bibitem{kokkoniemi2018simplified}
J.~{Kokkoniemi}, J.~{Lehtomäki}, and M.~{Juntti}, ``Simplified molecular
  absorption loss model for 275–400 gigahertz frequency band,'' in {\em 12th
  European Conference on Antennas and Propagation (EuCAP 2018)}, pp.~1--5,
  2018.

\bibitem{wu2020interference}
Y.~Wu, J.~Kokkoniemi, C.~Han, and M.~Juntti, ``Interference and coverage
  analysis for terahertz networks with indoor blockage effects and
  line-of-sight access point association,'' {\em IEEE Transactions on Wireless
  Communications}, 2020.

\bibitem{slocum2013atmospheric}
D.~M. Slocum, E.~J. Slingerland, R.~H. Giles, and T.~M. Goyette, ``Atmospheric
  absorption of terahertz radiation and water vapor continuum effects,'' {\em
  Journal of Quantitative Spectroscopy and Radiative Transfer}, vol.~127,
  pp.~49--63, 2013.

\bibitem{alduchov1996improved}
O.~A. Alduchov and R.~E. Eskridge, ``Improved magnus form approximation of
  saturation vapor pressure,'' {\em Journal of applied meteorology}, vol.~35,
  no.~4, pp.~601--609, 1996.

\bibitem{bai2014coverage}
T.~Bai and R.~W. Heath, ``Coverage and rate analysis for millimeter-wave
  cellular networks,'' {\em IEEE Transactions on Wireless Communications},
  vol.~14, no.~2, pp.~1100--1114, 2014.

\bibitem{alzer1997some}
H.~Alzer, ``On some inequalities for the incomplete gamma function,'' {\em
  Mathematics of Computation}, vol.~66, no.~218, pp.~771--778, 1997.

\bibitem{andrews2011tractable}
J.~G. Andrews, F.~Baccelli, and R.~K. Ganti, ``A tractable approach to coverage
  and rate in cellular networks,'' {\em IEEE Transactions on communications},
  vol.~59, no.~11, pp.~3122--3134, 2011.

\bibitem{hamdi2007useful}
K.~A. Hamdi, ``A useful technique for interference analysis in nakagami
  fading,'' {\em IEEE transactions on communications}, vol.~55, no.~6,
  pp.~1120--1124, 2007.

\bibitem{bai2014analysis}
T.~Bai, R.~Vaze, and R.~W. Heath, ``Analysis of blockage effects on urban
  cellular networks,'' {\em IEEE Transactions on Wireless Communications},
  vol.~13, no.~9, pp.~5070--5083, 2014.

\end{thebibliography}

%

\end{document}